\newcommand{\tr}{\text{tr}}
\newcommand{\vect}{\text{vec}}
\newtheorem{theorem}{Theorem}[section]
\newtheorem{lemma}{Lemma}[section]
\newtheorem{proposition}{Proposition}[section]
\newtheorem{remark}{Remark}[section]
\title{\Large \bf
A Switched Dynamical System Framework for Analysis of Massively Parallel Asynchronous Numerical Algorithms
}
\author{Kooktae Lee, Raktim Bhattacharya, and Vijay Gupta
\thanks{Kooktae Lee and Raktim Bhattacharya are with the Department of Aerospace Engineering, Texas A\&M University,
        College Station, TX 77843-3141, USA, {\tt\scriptsize \{animodor,raktim\}@tamu.edu.}
		Vijay Gupta is with the Department of Electrical Engineering, University of Notre Dame,
        Notre Dame, IN 46556, USA, {\tt\scriptsize vgupta2@nd.edu.}
        Kooktae Lee and Raktim Bhattacharya were supported by NSF award \#1349100, and Vijay Gupta was supported by NSF award \#0846631.
        }%
}
\begin{document}
\maketitle
\thispagestyle{empty}
\pagestyle{empty}

\begin{abstract}
In the near future, massively parallel computing systems will be necessary to solve computation intensive applications. The key bottleneck in massively parallel implementation of numerical algorithms is the synchronization of data across processing elements (PEs) after each iteration, which results in significant idle time. Thus, there is a trend towards relaxing the synchronization and adopting an asynchronous model of computation to reduce idle time. However, it is not clear what is the effect of this relaxation on the stability and accuracy of the numerical algorithm.

In this paper we present a new framework to analyze such algorithms. We treat the computation in each PE as a dynamical system and model the asynchrony as stochastic switching. The overall system is then analyzed as a switched dynamical system. However, modeling of massively parallel numerical algorithms as switched dynamical systems results in a very large number of modes, which makes current analysis tools available for such systems computationally intractable. We develop new techniques that circumvent this scalability issue. The framework is presented on a one-dimensional heat equation and the proposed analysis framework is verified by solving the partial differential equation (PDE) in a $\mathtt{nVIDIA\: Tesla^{\scriptsize{TM}}}$ GPU machine, with asynchronous communication between cores.
\end{abstract}
\section{Introduction}

Exascale computing systems will soon be available to study computation intensive applications such as multi-physics multi-scale simulations of natural and engineering systems. Many scientific and practical problems can be described very accurately by ordinary or partial differential equations which may be tightly coupled with long-range correlations. These exascale systems may have $O(10^5 -  10^6)$ processors ranging from multicore processors to symmetric multiprocessors~\cite{fan2004gpu, owens2008gpu, nickolls2008scalable}. Furthermore, such systems are likely to be heterogeneous using both heavily multi-threaded CPUs as well as GPUs. Many challenges must be overcome before exascale systems can be utilized effectively in such applications. One such obstacle is the communication in tightly coupled problems during parallel implementation of any iterative numerical algorithm.  This communication requires massive data movement in turn leading to idle time as the cores need to be synchronized after each time step. 

Recent literature has proposed relaxing these synchronization requirements across the PEs \cite{donzis2014asynchronous}. This potentially eliminates the overhead associated with extreme parallelism and significantly reduces computational time. However, the price to pay is loss of predictability possibly resulting in  calculation errors. Thus, a rigorous analysis of the tradeoff between speed and accuracy is critical. This paper present a framework for quantifying this tradeoff by analyzing the asynchronous numerical algorithm as a switched dynamical system 
\cite{daafouz2002stability,lin2005stability, lee2014optimal, hassibi1999control, xiao2000control, zhang2005new, liu2009stabilization, lee2014acc, lee2015performance, lee2015DNCS}. While there is a large literature for analysis of such systems, these techniques are not applicable to our application. The reason is that due to the large number of PEs, the  switched system model has an extremely large number of modes, which makes the available analysis tools intractable. Key contributions in this paper include new techniques for a) \textit{stability analysis}, or quantification of steady-state error with respect to the synchronous solution; b) \textit{convergence rate analysis} of the expected value of this error;  and c) \textit{probabilistic bounds} on this error. These techniques are developed to be computationally efficient, and avoid the aforementioned scalability issue.

The paper is organized as follows. Section II addresses the problems for the asynchronous numerical algorithm. In section III, we introduce a switched system framework to model the system structure for the asynchronous numerical scheme.
The stability results are presented in section IV, and section V shows the convergence rate analysis. Then, the error analysis in probability is developed in section VI. Section VII demonstrates the usefulness of the proposed method by examples. Finally, section VIII concludes this paper.

\section{Problem Formulation}
\textit{Notation:} The symbol $||\cdot||$ and $||\cdot||_{\infty}$ stand for the Euclidean and infinity norm, respectively. The set of positive integers are denoted by $\mathbb{N}$. Further, $\mathbb{N}_0\triangleq \mathbb{N}\cup \{0\}$. Also, $\lambda(\cdot)$ represents an eigenvalue of a square matrix. In particular, $\lambda_{max}(\cdot)$ and $\lambda_{min}(\cdot)$ denote the largest and the smallest eigenvalue in magnitude, respectively. The symbols $\otimes$, $\text{det}(\cdot)$, $\tr(\cdot)$, and $\vect(\cdot)$ denote Kronecker product, matrix determinant, trace operator, and vectorization operator, respectively. Finally, the symbol \textbf{Pr}($\cdot$) stands for the probability.

In this paper we demonstrate our framework and techniques on the one-dimensional heat equation, given by
\begin{align}
\frac{\partial u}{\partial t} = \alpha\frac{\partial^2 u}{\partial x^2}, \quad t\geq 0, \label{eqn:1}
\end{align}
where $u$ is the time and space-varying state of the temperature, and $t$ and $x$ are continuous time and space respectively. The constant $\alpha>0$ is the thermal diffusivity of the given material. 

The PDE is solved numerically using the finite difference method by Euler explicit scheme, with a forward difference in time and a central difference in space. Thus \eqref{eqn:1} is approximated as 
\begin{align}
\frac{u_{i}(k+1) - u_{i}(k)}{\Delta t} &= \alpha\left(\frac{u_{i+1}(k) - 2u_{i}(k) + u_{i-1}(k)}{\Delta x^2}\right),\label{eqn:pde}
\end{align}
where $k\in\mathbb{N}_0$ is the discrete-time index and $u_i$ is the temperature value at $i^{th}$ grid space point. The symbols $\Delta t$ and $\Delta x$ denote the sampling time and the grid resolution in space, respectively. Further, if we define a constant $ r\triangleq\alpha\frac{\Delta t}{\Delta x^2}$, then \eqref{eqn:pde} can be written as 
\begin{align}
u_{i}(k+1) = ru_{i+1}(k) + (1-2r)u_{i}(k) + ru_{i-1}(k),\label{eqn:sync}
\end{align}
It is important to observe that \eqref{eqn:sync} is a discrete-time linear dynamical system.

\begin{figure}[h!]
\begin{center}
\includegraphics[scale=0.4]{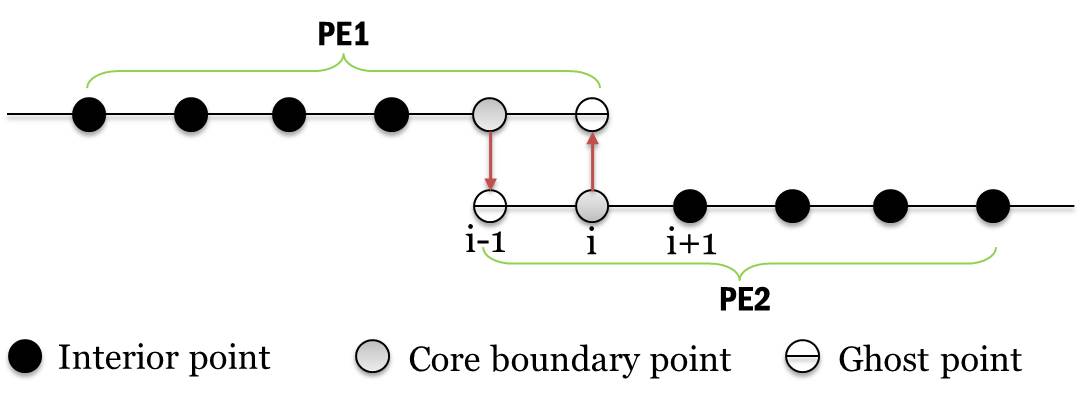}
\caption{Discretized one-dimensional domain with an asynchronous numerical algorithm. the PE denotes a group of grid points, assigned to each core.}\label{fig.0}
\end{center}
\end{figure}
Fig. \ref{fig.0} illustrates the numerical scheme over the discretized 1D spatial domain. A typical \textit{synchronous} parallel implementation of this numerical scheme  assigns several of these grid points to each PE. The updates for the temperature at the grid points assigned to each PE, occur in parallel. However, at every time step $k$, the data associated with the boundary grid points, where the communication is necessary are synchronized, and used to compute $u_i(k+1)$. This synchronization across PEs is slow, especially for massively parallel systems (estimates of idle time due to this synchronization give figures of up to 80\% of the total time taken for the simulation as idle time). Recently, an alternative implementation which is \textit{asynchronous} has been proposed. In this implementation, the updates in a PE occur without waiting for the other PEs to finish and their results to be synchronized. The data update across PEs occurs sporadically and independently. This asynchrony directly affects the update equation for the boundary points, as they depend on the grid points across PEs. For these points, the update is performed with the most recent available value, typically stored in a buffer. The effect of this asynchrony then propagates to other grid points. Within a PE, we assume there is no asynchrony and data is available in a common memory. 

Thus, the asynchronous numerical scheme corresponding to \eqref{eqn:sync} is given by
\begin{align}
u_{i}(k+1) = ru_{i+1}(k^{*}_{i+1}) + (1-2r)u_{i}(k) + ru_{i-1}(k^{*}_{i-1}),\label{eqn:async}
\end{align}
where $k^{*}_{i}\in \{k,k-1,k-2,\hdots,k-q+1\}$, $i=1,2,\hdots,N$, denotes the randomness caused by communication delays between PEs. The subscript $i$ in $k_i^*$ depicts that each grid space point may have different time delays.
The parameter $q$ is the length of a buffer that every core maintains to store data transmitted from the other cores. In this paper, we treat $k^{\ast}_{i}$ as a random variable and thus \eqref{eqn:async} can be considered to be a linear discrete-time dynamical system with stochastic updates. 

Although \eqref{eqn:async} is derived for the 1D heat equation, the treatment above can be developed for any parabolic PDEs. This observation encourages us to consider using tools from dynamical systems to analyze the effect of asynchrony in parallel numerical algorithms. Therefore, the primary goal of this study is to investigate the \textit{stability}, \textit{convergence rate}, and \textit{error probability} of the asynchronous numerical algorithm in the framework of stochastic switched dynamical systems. 

\section{A Switched System Approach}
Let us define the state vector $U_j(k)\in\mathbb{R}^{n} \triangleq [u_1^j(k), u_2^j(k),\hdots,u_n^j(k)]^{\top}$, where $u_i^j(k)$  stands for the $i^{th}$ grid space point in the $j^{th}$ PE and $n$ is the total number of grid points in the $j^{th}$ PE. Therefore, \eqref{eqn:sync} can be compactly written as
\begin{align*}
U(k+1) = AU(k),\quad k\in \mathbb{N}_0,
\end{align*}
where $U(k)\in\mathbb{R}^{Nn}\triangleq [U_1(k)^{\top}, U_2(k)^{\top},\hdots,U_N(k)^{\top}]^{\top}$, $N$ is the total number of PEs, $n$ is the size of the state for each PE, and system matrix $A\in\mathbb{R}^{Nn\times Nn}$ is given by
\small{
\begin{align*}
A &= \begin{bmatrix}
1 & 0 & 0 & \cdots & \cdots & 0\\
r & 1\text{-}2r & r & 0 & \cdots & 0\\
0 & r & 1\text{-}2r & r & \cdots & 0\\
\vdots & & \ddots & \ddots & \ddots & \\
0 & \cdots & & r & 1\text{-}2r& r\\
0 & \cdots & & 0 & 0 & 1
\end{bmatrix}\in\mathbb{R}^{Nn\times Nn}.
\end{align*}
}\normalsize

Note that the first and the last row of $A$ matrix specify the Dirichlet boundary conditions (see pp. 150, \cite{pletcher2012computational}).  i.e., we have the constant in time boundary temperatures for simplicity.
%
%
%
%

Next, we define the augmented state $X(k)\in\mathbb{R}^{Nnq}\triangleq [U(k)^{\top}, U(k-1)^{\top}, \hdots, U(k-q+1)^{\top}]^{\top}$, where, as stated before, $q$ is the buffer length.
For pedagogical simplicity (and without loss of generality), we consider the case with $q=2$ and $N=3$. Further, we let $n=1$, which implies there is only one grid point in each PE. For this particular case, we construct the following matrices,

\footnotesize{
\[W_1 = \left[ \begin{array}{c:c}
1 \quad 0 \quad 0 & 0 \quad 0 \quad 0\\
r \:\: 1\text{-}2r \:\: r & 0 \quad 0 \quad 0\\
0 \quad 0 \quad 1 & 0 \quad 0 \quad 0\\
\hdashline
& \\ 
  $\Large{\underline{I}}$ & $\Large{\underline{0}}$ \\ 
&\\
\end{array} \right],\:
W_2 = \left[ \begin{array}{c:c}
1 \quad 0 \quad 0 & 0 \quad 0 \quad 0\\
0 \:\: 1\text{-}2r \:\: r & r \quad 0 \quad 0\\
0 \quad 0 \quad 1 & 0 \quad 0 \quad 0\\
\hdashline
& \\ 
  $\Large{\underline{I}}$ & $\Large{\underline{0}}$ \\ 
&\\
\end{array} \right],\]
\[W_3 = \left[ \begin{array}{c:c}
1 \quad 0 \quad 0 & 0 \quad 0 \quad 0\\
r \:\: 1\text{-}2r \:\: 0 & 0 \quad 0 \quad r\\
0 \quad 0 \quad 1 & 0 \quad 0 \quad 0\\
\hdashline
& \\ 
  $\Large{\underline{I}}$ & $\Large{\underline{0}}$ \\ 
&\\
\end{array} \right],\:
W_4 = \left[ \begin{array}{c:c}
1 \quad 0 \quad 0 & 0 \quad 0 \quad 0\\
0 \:\: 1\text{-}2r \:\: 0 & r \quad 0 \quad r\\
0 \quad 0 \quad 1 & 0 \quad 0 \quad 0\\
\hdashline
& \\ 
  $\Large{\underline{I}}$ & $\Large{\underline{0}}$ \\ 
&\\
\end{array} \right],
\]
}\normalsize
where \underline{I} $\in\mathbb{R}^{Nn\times Nn}$ and \underline{0} $\in\mathbb{R}^{Nn\times Nn}$ are the identity and the zero matrices with appropriate dimensions. As in \cite{donzis2014asynchronous}, we assume that the condition $0 < r \leq 0.5$ holds from now on.
The asynchronous numerical scheme can then be written as a switched system
\begin{align}
X(k+1) = W_{\sigma_{k}}X(k),\:\:\:\sigma_k \in \{1,2,\hdots,m\},\:k\in\mathbb{N}_0,\label{eqn:switched}
\end{align}
where the matrices $W_{\sigma_{k}}\in\mathbb{R}^{Nnq\times Nnq}$, are the subsystem dynamics. In general, the total number of switching modes is $m=q^{2(N-2)}$ that is obtained by considering all cases to distribute every components $r$ in $W_1$ matrix, where the number of $r$ in $W_1$ is given by $2(N-2)$, into $q$ numbers of zero block matrix as in the above example. Therefore, the number of modes increase exponentially with the number of PEs, which is quite large for massively parallel systems. 

At every time step, the numerical scheme evolves using one of the $m$ modes, which depends on the variable $k^{\ast}_{i}$. In this paper, we model the variable $k^{\ast}_{i}$ as a random variable that evolves in an independently and identically distributed (i.i.d.) fashion in time, and independently from one core to the next. Hence, we let $\pi_j$ be the modal probability for $W_j$ which is assumed to be stationary in time. Let $\Pi\triangleq\{\pi_1,\pi_2,\hdots,\pi_m\}$, be the switching probabilities such that $0\leq\pi_j\leq 1$, $\forall j$ and $\sum_{j=1}^{m}\pi_j=1$. The system in \eqref{eqn:switched} is thus an i.i.d jump linear system, which is a simpler case of the more well-known Markovian jump linear systems \cite{lee2015performance}. Even though the analysis theory for such systems is well developed, the existing tools are not suitable for our application because of the extremely large number of modes, particularly when $N$ is large. Thus, we now develop an analysis theory for the i.i.d. jump linear systems which scales better with respect to the number of modes.

\section{Stability}
The first requirement is that of convergence of~\eqref{eqn:switched}. Because of the Dirichlet boundary conditions, we expect the temperature to converge to a constant value for every grid point. We proceed to analyze the conditions for convergence (or stability) of the system. To this end, we may try to use the infinity norm and apply the sub-multiplicative property to obtain $|| X(k+1)||_{\infty} = || W_{\sigma_{k}}X(k)||_{\infty} 
\leq$ $|| W_{\sigma_{k}}||_{\infty}|| X(k)||_{\infty} = || X(k)||_{\infty}
$, where the last equality holds since we have $|| W_j ||_{\infty} = 1$, $\forall j$.
This can be written as
\begin{equation}
\displaystyle\frac{\parallel X(k+1)\parallel_{\infty}}{\parallel X(k)\parallel_{\infty}}\leq 1.\label{eqn:7}
\end{equation}
The above result only shows that the solution from the asynchronous algorithm is \textit{marginally} stable and we are unable to determine the steady-state solution.

In fact, we can show that the asynchronous scheme also attains the same steady-state value as the synchronous scheme, regardless of the specific realization of $\{\sigma_k\}$. Using spectral decomposition, the matrices $W_j$ can be expressed in terms of the eigenvalues and corresponding eigenvectors as
\begin{align}
W_j\in\mathbb{R}^{Nnq\times Nnq} = \sum_{i=1}^{Nnq}\lambda_i^jv_i^js_i^j,\quad j=\{1,2,\hdots,m\},\label{eqn:spectral_decompostion}
\end{align}
where $\lambda_i^j\in\mathbb{R}$, $v_i^j\in\mathbb{R}^{Nnq\times 1}$, and $s_i^j\in\mathbb{R}^{1\times Nnq}$ denote the eigenvalues, right eigenvectors, and left eigenvectors of $W_j$, respectively.

Since  $\displaystyle\max_{i}|\lambda_i^j|\leq ||W_j||_{\infty} = 1$, $\forall j$, the spectral radius of $W_j$, $j=1,2,\hdots,m$, is less than or equal to $1$. Therefore, we may order the eigenvalues as $1 \geq |\lambda_1^j| \geq |\lambda_2^j| > \cdots \geq |\lambda_{Nnq}^j| \geq 0$.
It can be shown that all $W_j$ have two eigenvalues with value $1$, irrespective of the size of $q$ and $N$. Therefore, the eigenvalues for $W_j$ are ordered as $1 = |\lambda_1^j| = |\lambda_2^j| > |\lambda_3^j| \geq \cdots \geq |\lambda_{Nnq}^j| \geq 0$.

Moreover, the left and right eigenvectors for eigenvalues equal to 1 are common eigenvectors for all matrices $W_j$, $j=1,2,\hdots,m$.
These common left and right eigenvectors are

1) \textbf{Left eigenvectors:}
\begin{align}
s_1 = [1,0,\cdots,0 \:\:, \:\:\textbf{\textsl{0}}\:\: , \cdots , \;\;\textbf{\textsl{0}}\;\;]\in\mathbb{R}^{1\times Nnq},\label{s1}\\
s_2 = [0,\cdots,0,1 \:\:, \:\:\textbf{\textsl{0}}\:\: , \cdots , \;\;\textbf{\textsl{0}}\;\;]\in\mathbb{R}^{1\times Nnq},\label{s2}
\end{align}

2) \textbf{Right eigenvectors:}
\begin{align}
v_1 = [\mu_1,\mu_1,\cdots,\mu_1]^{\top}\in\mathbb{R}^{Nnq\times 1},\label{v1}\\
v_2 = [\mu_2,\mu_2,\cdots,\mu_2]^{\top}\in\mathbb{R}^{Nnq\times 1},\label{v2}
\end{align}
where $\textbf{\textsl{0}}\in\mathbb{R}^{1\times Nn}$ denotes a row vector with all zero elements, and
$\mu_1 \triangleq [1,\frac{Nn-2}{Nn-1},\cdots, \frac{Nn-j}{Nn-1}, \cdots , \frac{1}{Nn-1},0]\in\mathbb{R}^{1\times Nn}$,
$\mu_2 \triangleq [0,\frac{1}{Nn-1},\frac{2}{Nn-1},\cdots, \frac{j-1}{Nn-1}, \cdots , \frac{Nn-2}{Nn-1},1]\in\mathbb{R}^{1\times Nn}$, $j=1,2,\hdots,Nn$.

Notice that we have $W_jv_i = v_i$ and $s_iW_j = s_i$, $i=1,2$, $\forall j$. Then, the steady-state value for the asynchronous scheme is given by the following result.

\begin{proposition}
Consider the i.i.d. jump linear system in \eqref{eqn:switched} with subsystem matrices $W_j$, $j=1,2,\hdots,m$ and a stationary switching probability $\Pi$. For a given initial condition $X(0)$, if we define $\Psi\triangleq v_1s_1 + v_2s_2$, where $v_i$ and $s_i$, $i=1,2$, are given in \eqref{s1}--\eqref{v2}, then, the steady-state value $X_{ss}$ has the following form:
\begin{align*}
X_{ss}\triangleq \lim_{k\rightarrow\infty}X(k)=\Psi X(0),
\end{align*}
irrespective of the switching sequence $\{\sigma_k\}$.
\end{proposition}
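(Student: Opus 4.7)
The plan is to first verify that $\Psi$ is a rank-two projector absorbed on both sides by every mode, and then split the trajectory into a $\Psi$-invariant component that already equals the claimed limit plus a residual in the complementary subspace that must be shown to vanish.

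First I would check biorthogonality of the common eigenvectors. Reading off coordinates from \eqref{s1}--\eqref{v2}, $s_1$ and $s_2$ extract the $1^{\mathrm{st}}$ and $(Nn)^{\mathrm{th}}$ entries of their argument, so inspection of $\mu_1,\mu_2$ yields $s_i v_\ell = \delta_{i\ell}$ for $i,\ell\in\{1,2\}$. Hence $\Psi = v_1 s_1 + v_2 s_2$ is idempotent. Combining this with the stated invariances $W_j v_i = v_i$ and $s_i W_j = s_i$, a one-line calculation gives
\begin{align*}
W_j \Psi = (W_j v_1)s_1 + (W_j v_2)s_2 = v_1 s_1 + v_2 s_2 = \Psi,
\end{align*}
and, symmetrically, $\Psi W_j = \Psi$ for every $j$.

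Next, setting $\Phi_k \triangleq W_{\sigma_{k-1}}\cdots W_{\sigma_0}$, a trivial induction from $W_j\Psi=\Psi$ shows that $\Phi_k\Psi = \Psi$ for every $k$ and every switching sequence. Decomposing $X(0) = \Psi X(0) + (I-\Psi)X(0)$ then yields
\begin{align*}
X(k) = \Psi X(0) + \Phi_k(I-\Psi)X(0).
\end{align*}
The first summand already equals the conjectured limit, so the proposition reduces to showing that the residual $\Phi_k(I-\Psi)X(0)$ tends to zero along every sample path.

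This residual step is the main obstacle. Because $s_1,s_2$ are common left eigenvectors of all $W_j$, the subspace $\operatorname{range}(I-\Psi) = \ker(s_1)\cap\ker(s_2)$ is invariant under every mode, and the restricted spectrum of $W_j$ on it is $\{\lambda_3^j,\ldots,\lambda_{Nnq}^j\}$, strictly inside the unit disk by the ordering established before \eqref{s1}. The identities $W_j\Psi = \Psi = \Psi W_j$ imply $W_j(I-\Psi) = (I-\Psi)W_j = W_j-\Psi$, so the residual factors as the ordered product $\prod_{j=0}^{k-1} W_{\sigma_j}(I-\Psi)$. What remains is to exhibit a single norm $\|\cdot\|_{\ast}$ on $\operatorname{range}(I-\Psi)$ in which each $W_j(I-\Psi)$ contracts by a common factor $\rho<1$; one would then get $\|\Phi_k(I-\Psi)\|_{\ast} \leq \rho^k \to 0$ uniformly over switching sequences. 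The hard part is precisely the construction of such a common contractive norm, since pointwise spectral radii below one on a common invariant subspace do not by themselves guarantee joint contraction. I would exploit the row-stochasticity of each $W_j$ (which follows from $0<r\leq 0.5$) together with the common invariant pair $\operatorname{span}(v_1,v_2)$ to produce a joint Lyapunov function, either quadratic via a feasibility LMI over the finitely many modes, or polytopic using the shared Perron structure.
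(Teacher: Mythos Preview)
Your decomposition is exactly the paper's, only phrased in projector language. The paper writes each $W_j = \Psi + (W_j-\Psi)$ via spectral decomposition, expands the ordered product $W_{\sigma_{k-1}}\cdots W_{\sigma_0}$, and collects a term $\Psi^k=\Psi$ plus a remainder $g(k)$. Since, as you verify, $W_j\Psi=\Psi W_j=\Psi$, all cross terms in that expansion vanish and $g(k)$ is precisely your $\Phi_k(I-\Psi)=\prod_{j}(W_{\sigma_j}-\Psi)$; the two arguments are identical up to that point.

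Where you differ is in the treatment of the residual. The paper simply asserts that $g(k)\to 0$ because it is ``formed by the product of $\lambda_i^j$'' with $|\lambda_i^j|<1$ for $i>2$, appealing to an infinite product of sub-unit eigenvalues. You are right to flag this as the delicate step: the spectral decompositions of the $W_j$ use \emph{different} eigenvector systems $v_i^j,s_i^j$ for different $j$, so the matrix product $\prod_j(W_{\sigma_j}-\Psi)$ is not a product of scalars, and sub-unit spectral radii on the common invariant subspace $\operatorname{range}(I-\Psi)$ do not by themselves force an arbitrary ordered product to zero. Your proposed remedies---a common quadratic Lyapunov function via an LMI over the finitely many modes, or a polytopic norm built from the row-stochastic structure available when $0<r\le 0.5$---are the appropriate tools here, and would in fact close a gap that the paper's own argument leaves open.
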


\begin{proof}
Let the eigenvalues of $W_j$ be ordered in magnitude by $ 1 = |\lambda_1^j| = |\lambda_2^j| > |\lambda_3^j| \geq \cdots \geq |\lambda_{Nnq}^j| \geq 0$.
Also, let $v_i^j$ and $s_i^j$ be the right and left eigenvector corresponding to $\lambda_i^j$, respectively.
Using the spectral decomposition, $W_j$ can be alternatively expressed by $W_j = \sum_{i=1}^{Nnq}\lambda_i^{j}v_i^{j}s_i^{j} = \Psi + \sum_{\lambda_i^j\neq 1}f^j(i)$, where $\Psi\triangleq v_1s_1 + v_2s_2$ and $f^j(i) \triangleq \lambda_i^{j}v_i^{j}s_i^{j}$.

Then, starting with $X(0)$, the realization of the switching sequence $\sigma_k$ results in
\begin{align*}
&X(k) = W_{\sigma_{k-1}}W_{\sigma_{k-2}}\cdots W_{\sigma_{1}}W_{\sigma_{0}}X(0)\\
&= \Big(\Psi + \sum_{\lambda_i^{\sigma_{k-1}}\neq 1}f^{\sigma_{k-1}}(i)\Big)\cdots
\Big(\Psi + \sum_{\lambda_i^{\sigma_{0}}\neq 1}f^{\sigma_0}(i)\Big)X(0)\\
&= \Big(\Psi^k + g(k)\Big)X(0),
\end{align*}
where in above equation, $g(k)$ represents all the other multiplication terms except $\Psi^k$ term.
Note that $g(k)$ is formed by the product of $\lambda_i^j$, where $0\leq |\lambda_i^j| < 1$, $\forall i>2$, $\forall j$. Consequently, if $k\rightarrow\infty$, then $g(k)$ is asymptotically convergent to zero since the \textit{infinite} number of multiplication of the term $\lambda_i^j$, $\forall i>2$, converges to zero.
Therefore, we have
\begin{align*}
X_{ss} = \lim_{k\rightarrow\infty}X(k) = \lim_{k\rightarrow\infty}\Psi^kX(0)=\Psi X(0).
\end{align*}
The last equality in above equation holds because $\Psi^k = \Psi^{k-1}=\cdots=\Psi$, $\forall k\in\mathbb{N}$.
\end{proof}

\section{Convergence rate}
In this section, we investigate how fast the expected value of the state converges to the steady-state $X_{ss}$ by analyzing the transient behavior of the asynchronous algorithm. Let us define a new state variable $e(k)\triangleq X(k) - X_{ss}$. The expected value of $e(k)$ is given by
$\bar{e}(k) \triangleq \mathbb{E}[X(k)-X_{ss}] = \mathbb{E}[X(k)] - X_{ss} = \bar{X}(k) - X_{ss}$, where $\bar{X}(k)\triangleq \mathbb{E}[X(k)]$.
Therefore, the convergence rate of $||\bar{e}(k)||$ will provide bound for the convergence rate of $||\bar{X}(k)-X_{ss}||$.

To obtain an upper bound for the convergence rate of $||\bar{e}(k)||$, we use the following matrix transformation. As described in \eqref{eqn:spectral_decompostion}, each modal matrix $W_j$ can be alternatively expressed by $W_j = \sum_{i=1}^{Nnq}\lambda_i^jv_i^js_i^j$, where $\lambda_i^j$, $v_i^j$, and $s_i^j$ denote the eigenvalues, right and, respectively, left eigenvectors for $W_j$.
If we define the transformed matrix $\tilde{W}_j \triangleq W_j - \sum_{\lambda_i^1=1}\lambda_i^jv_i^js_i^j = W_j - \Psi = \sum_{\lambda_i^j\neq 1}\lambda_i^jv_i^js_i^j$, then the modal dynamics with the corresponding state $e_j(k)$, is given by
\begin{align}
e_j(k+1) = \tilde{W}_je_j(k),\quad j=\{1,2,\hdots,m\}, \: k\in\mathbb{N}_0.\label{eqn:8-1}
\end{align}
Moreover, as in \eqref{eqn:switched}, the error state $e(k) = X(k) - X_{ss}$, is governed by
\begin{align}
e(k+1) = \tilde{W}_{\sigma_k}e(k),\quad \sigma_k\in\{1,2,\hdots,m\},\:k\in\mathbb{N}_0.\label{eqn:8}
\end{align}

The system in \eqref{eqn:8} is also a switched linear system. The transformed matrix $\tilde{W_j}$ are the modes of the error dynamics. Generally, it is difficult to estimate the convergence rate of the ensemble with stochastic jumps. Previous works \cite{laffey2007tensor,shorten2003result,liberzon2003switching,gurvits2007stability} have used the common Lyapunov function approaches, to analyze stability and the convergence rate. However, the existence of a common Lyapunov function is the \textit{only sufficient condition} for the system stability, and hence there may not exist a common Lyapunov function for the asynchronous algorithm. Moreover, extremely large values of $m$ make it very difficult to test every conditions for the existence of such a common Lyapunov function. For this reason, we bound the convergence rate of $\bar{e}(k)$, instead of bounding $e(k)$ directly.

\begin{lemma}\label{lemma:5.1}
Consider an i.i.d. jump linear system given by \eqref{eqn:8} with the switching probability $\Pi=\{\pi_1,\pi_2,\hdots,\pi_m\}$. 
If the initial state $e(0)$ is given and has no uncertainty, the expected value of $e(k)$ is updated by
\begin{align}
\bar{e}(k)&\triangleq\mathbb{E}[e(k)] = \Lambda^ke(0)\quad \text{or}\quad
\bar{e}(k+1) = \Lambda \bar{e}(k),\label{eqn:11-1}
\end{align}
\end{lemma}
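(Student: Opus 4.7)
The plan is to derive the recursion by taking expectations on both sides of \eqref{eqn:8} and exploiting the i.i.d. assumption on $\{\sigma_k\}$. Since the switching sequence is i.i.d., $\sigma_k$ is independent of the entire past $\sigma_0, \sigma_1, \ldots, \sigma_{k-1}$. The state $e(k)$ is produced by iterating $e(\ell+1) = \tilde{W}_{\sigma_\ell} e(\ell)$ from the \emph{deterministic} initial condition $e(0)$, so $e(k)$ is a measurable function of $\sigma_0,\ldots,\sigma_{k-1}$ only. Consequently, $\tilde{W}_{\sigma_k}$ is independent of $e(k)$, which is the key fact needed to pull the matrix factor out of the expectation.

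With that independence in place, I would compute
\begin{equation*}
\bar{e}(k+1) = \mathbb{E}[\tilde{W}_{\sigma_k} e(k)] = \mathbb{E}[\tilde{W}_{\sigma_k}]\,\mathbb{E}[e(k)] = \Lambda\,\bar{e}(k),
\end{equation*}
where I identify $\Lambda \triangleq \mathbb{E}[\tilde{W}_{\sigma_k}] = \sum_{j=1}^{m} \pi_j \tilde{W}_j$. Here the stationarity of the modal distribution $\Pi$ is what ensures $\Lambda$ is time-invariant. A straightforward induction on $k$ with the base case $\bar{e}(0)=e(0)$ then promotes the one-step recursion into the closed form $\bar{e}(k) = \Lambda^k e(0)$.

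The argument is short and I do not foresee a genuine obstacle; the only point deserving care is the measurability/independence claim used to factor $\mathbb{E}[\tilde{W}_{\sigma_k} e(k)]$, which needs the hypothesis that $e(0)$ is deterministic so that randomness in $e(k)$ comes solely from past switches. If $e(0)$ were random but independent of the switching process, the same argument still goes through; if it were correlated with the switches, the factorization would fail, so that deterministic-initial-condition hypothesis stated in the lemma is doing real work.
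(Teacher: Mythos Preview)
Your argument is correct and essentially the same as the paper's: both exploit the i.i.d.\ hypothesis to factor $\mathbb{E}[\tilde{W}_{\sigma_k}e(k)]$ into $\mathbb{E}[\tilde{W}_{\sigma_k}]\,\mathbb{E}[e(k)]$, the only cosmetic difference being that the paper expands the full product $\tilde{W}_{\sigma_{k-1}}\cdots\tilde{W}_{\sigma_0}e(0)$ and factors all expectations at once, whereas you do one step and induct. Your explicit remark that $e(k)$ is measurable with respect to $\sigma_0,\ldots,\sigma_{k-1}$ (hence independent of $\sigma_k$) is a welcome clarification that the paper leaves implicit.
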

where $\displaystyle\Lambda\triangleq \sum_{i=1}^{m}\pi_i\tilde{W}_i$.

\begin{proof}
For an i.i.d. jump process with a given deterministic initial error $e(0)$, we have
\begin{align*}
\mathbb{E}[e(k)] &= \mathbb{E}[\tilde{W}_{\sigma_{k\text{--}1}}e(k\text{--}1)]\\
&=\mathbb{E}[\tilde{W}_{\sigma_{k\text{--}1}}\tilde{W}_{\sigma_{k\text{--}2}}\hdots \tilde{W}_{\sigma_{1}}\tilde{W}_{\sigma_{0}}e(0)]\\
&= \underbrace{\mathbb{E}[\tilde{W}_{\sigma_{k\text{--}1}}]}_{=\Lambda}\hdots\underbrace{\mathbb{E}[\tilde{W}_{\sigma_{1}}]}_{=\Lambda}\underbrace{\mathbb{E}[\tilde{W}_{\sigma_{0}}]}_{=\Lambda}e(0)=\Lambda^{k} e(0).
\end{align*}
\end{proof}

Since the matrix $\Lambda$ is given by $\Lambda = \sum_{i=1}^{m}\pi_i\tilde{W}_i$, the computation of $\Lambda$ requires all matrices $W_j$, $j=1,2,\hdots,m$. As pointed out earlier, this calculation is intractable due to the extremely large number of the switching modes $m$. Therefore, instead of using \eqref{eqn:11-1}, we provide a computationally efficient method to bound $||\bar{e}(k)||$ through a Lyapunov theorem.

Consider a discrete-time Lyapunov function $V(k) = \bar{e}(k)^{\top}P\bar{e}(k)^{}$, where $P$ is a positive definite matrix.
Since it is shown that the original state $X(k)$ is convergent to the unique steady-state $X_{ss}$ as $k\rightarrow \infty$ irrespective of $\{\sigma_k\}$, the expected error $\bar{e}(k) \triangleq \bar{X}(k) - X_{ss}$ is asymptotically stable. Therefore, one can employ the \textit{Converse Lyapunov Theorem}\cite{lin2009stability}, which guarantees the existence of a positive definite matrix $P$, satisfying the following linear matrix inequality (LMI) condition $\Lambda^{\top}P\Lambda^{} - P< -Q$, where $Q$ is some positive definite matrix. The matrix inequality can be interpreted in the sense of positive definiteness. (i.e., $A>B$ means the matrix $A-B$ is positive definite.) 
Then, the above LMI condition results in
$\Delta V(k) = V(k+1) - V(k) = \bar{e}(k)^{\top}(\Lambda^{\top}P\Lambda^{} - P)\bar{e}(k)^{}
< -\bar{e}(k)^{\top}Q\bar{e}(k)
\leq -\lambda_{min}(Q)$ $\parallel \bar{e}(k)\parallel^2$.
Also, the Lyapunov function $V(k)$ satisfies
\begin{align*}
\lambda_{min}(P)\parallel \bar{e}(k)\parallel^2 \:\leq\: V(k) \:\leq \lambda_{max}(P)\parallel \bar{e}(k)\parallel^2,
\end{align*}
resulting in $\displaystyle-\parallel \bar{e}(k)\parallel^2 \leq -\frac{V(k)}{\lambda_{max}(P)}$. Therefore, we have
\begin{eqnarray}
\Delta V(k) < -\lambda_{min}(Q)\parallel \bar{e}(k)\parallel^2\leq -\frac{\lambda_{min}(Q)}{\lambda_{max}(P)}V(k).\nonumber \\
\Rightarrow V(k+1) < \Big(1-\frac{\lambda_{min}(Q)}{\lambda_{max}(P)}\Big)V(k). \label{eqn_Vk+1<Vk}
\end{eqnarray}
Hence, $\parallel \bar{e}(k)\parallel$ is bounded by a following equation:
\begin{align}
\parallel \bar{e}(k)\parallel^2 < K\bigg(1-\dfrac{\lambda_{min}(Q)}{\lambda_{max}(P)}\bigg)^k\parallel e(0)\parallel^2, \label{eqn_M(1-a)}
\end{align}
where $K>0$ is some constant.

Next, we bound the convergence rate for $||\bar{e}(k)||$ by using the result in $\eqref{eqn_M(1-a)}$ as follows.\\

\begin{proposition}\label{proposition:a<abar => boundedness}
For a \textit{stable} i.i.d. jump linear system \eqref{eqn:8} with a stationary switching probability $\Pi$, consider a Lyapunov candidate function for the state $\bar{e}$, given by $V \triangleq \bar{e}^{\top}P\bar{e}$, where $P$ is a positive definite matrix. In addition, a Lyapunov candidate function for \eqref{eqn:8-1} is given by $V_j \triangleq e_j^{\top}P_je_j^{}$, $j=1,2,\hdots,m$, where $P_j$ is a positive definite matrix.
According to the Converse Lyapunov Theorem, there exist $P_j>0$ and $P>0$ such that $\tilde{W}_j^{\top}P_j\tilde{W}_j^{} - P_j < -Q_j, \: j=1,2,\hdots,m$ and $\Lambda^{\top}P\Lambda^{} - P < -Q$, where $Q_j$ and $Q$ are any positive definite matrices. 
Then, with a particular choice of these matrices, we assume that $P_j$ and $P$ satisfy the following conditions:
\begin{eqnarray}
&\tilde{W}_j^{\top}P_j\tilde{W}_j^{} - P_j = -I, &\quad j=1,2,\hdots,m,\label{eqn:14}\\
&\Lambda^{\top}P\Lambda^{} - P \leq -\varepsilon_j I, &\quad \text{for some }j,\label{Abar'PbarAbar-Pbar < -cPbar}
\end{eqnarray}
where $\displaystyle \varepsilon_j \triangleq \dfrac{\lambda_{max}(P)}{\lambda_{max}(P_j)}> 0$, $\tilde{W}_j$ are the modal matrices in \eqref{eqn:8-1}, and $\displaystyle\Lambda \triangleq \sum_{j=1}^{m}\pi_j\tilde{W}_j$.

Then, $||\bar{e}(k)||^2$ is bounded by
\begin{align}
\parallel \bar{e}(k)\parallel^2 < K\bigg(1-\dfrac{1}{\lambda_{max}(P_j)}\bigg)^k\parallel e(0)\parallel^2, \label{eqn_M_j(1-a_j)}
\end{align}
\end{proposition}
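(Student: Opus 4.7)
The plan is to reduce the proposition to a one-line substitution into the rate bound \eqref{eqn_M(1-a)} that was derived immediately above the statement. Recall that for any positive definite $P$ and $Q$ satisfying $\Lambda^{\top} P \Lambda - P < -Q$, the converse-Lyapunov argument in the paragraph preceding the proposition yields $\|\bar e(k)\|^{2} < K\bigl(1 - \lambda_{\min}(Q)/\lambda_{\max}(P)\bigr)^{k}\,\|e(0)\|^{2}$. Hypothesis \eqref{Abar'PbarAbar-Pbar < -cPbar} hands me exactly such a pair, with the specific choice $Q = \varepsilon_{j} I$ for some index $j$, so that $\lambda_{\min}(Q) = \varepsilon_{j}$. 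Plugging in and using the definition $\varepsilon_{j} = \lambda_{\max}(P)/\lambda_{\max}(P_{j})$ collapses the ratio $\lambda_{\min}(Q)/\lambda_{\max}(P)$ to $1/\lambda_{\max}(P_{j})$, which is exactly \eqref{eqn_M_j(1-a_j)}.

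For the estimate to be non-vacuous I would also verify that $1 - 1/\lambda_{\max}(P_{j}) \in [0,1)$. This is immediate from \eqref{eqn:14}: rewriting that equality as $P_{j} = I + \tilde W_{j}^{\top} P_{j}\tilde W_{j}$ shows $P_{j} \succeq I$, hence $\lambda_{\max}(P_{j}) \geq 1$, and positive definiteness of $P_{j}$ keeps the factor strictly below one.

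The only genuinely delicate point, and the place I would expect most of the conceptual effort to go, is the internal consistency of the paired assumptions \eqref{eqn:14}-\eqref{Abar'PbarAbar-Pbar < -cPbar}: that one can simultaneously normalise the modal Lyapunov matrices $P_{j}$ by $\tilde W_{j}^{\top} P_{j}\tilde W_{j} - P_{j} = -I$ and then find a single $P$ for $\Lambda$ whose Lyapunov-inequality slack is dominated from above by $-\varepsilon_{j} I$ with the specific modal scale $\varepsilon_{j} = \lambda_{\max}(P)/\lambda_{\max}(P_{j})$ for at least one $j$. The Converse Lyapunov Theorem, available because $\Lambda$ is a contraction under the standing stability hypothesis, delivers some Lyapunov pair $(P,Q)$, but matching $\lambda_{\min}(Q)$ to the modal number $\varepsilon_{j}$ is a calibration that \eqref{Abar'PbarAbar-Pbar < -cPbar} quietly absorbs into the phrase "with a particular choice of these matrices". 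In a careful writeup I would either treat \eqref{Abar'PbarAbar-Pbar < -cPbar} as a standing assumption and note that it is satisfied whenever some mode has $\lambda_{\max}(P_{j}) \geq \lambda_{\max}(P)/\lambda_{\min}(Q)$, or else rescale $P$ (which leaves $\lambda_{\min}(Q)/\lambda_{\max}(P)$ invariant) to make the identification explicit. Once that calibration is granted, the remainder of the argument is the substitution above.
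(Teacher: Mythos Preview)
Your proposal is correct and follows essentially the same route as the paper: substitute $Q=\varepsilon_j I$ into the generic Lyapunov rate bound \eqref{eqn_M(1-a)}, use $\lambda_{\min}(\varepsilon_j I)=\varepsilon_j$, and cancel via the definition $\varepsilon_j=\lambda_{\max}(P)/\lambda_{\max}(P_j)$. Your additional remarks on non-vacuity ($\lambda_{\max}(P_j)\geq 1$) and on the consistency of hypotheses \eqref{eqn:14}--\eqref{Abar'PbarAbar-Pbar < -cPbar} anticipate exactly what the paper handles separately in Lemma~\ref{lemma:5.2} and Theorem~\ref{Theorem:5.1}.
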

where $K>0$ is some constant.\\

\begin{proof}
By applying the result in \eqref{eqn_M(1-a)} into \eqref{Abar'PbarAbar-Pbar < -cPbar}, we have
\begin{align*}
\parallel \bar{e}(k)\parallel^2 &< K\bigg(1-\frac{\lambda_{min}(\varepsilon_j I)}{\lambda_{max}(P)}\bigg)^k\parallel e(0)\parallel^2\\
&= K\bigg(1-\frac{\varepsilon_j}{\lambda_{max}(P)}\bigg)^k\parallel e(0)\parallel^2\\
&= K\bigg(1-\frac{1}{\lambda_{max}(P_j)}\bigg)^k\parallel e(0)\parallel^2.
\end{align*}
The last equality in above equation holds by the definition of $\varepsilon_j$.
\end{proof}

Proposition \ref{proposition:a<abar => boundedness} says that we can always guarantee the bound for $||\bar{e}(k)||$ if \eqref{Abar'PbarAbar-Pbar < -cPbar} holds. Consequently, the existence of such a $P$, satisfying \eqref{Abar'PbarAbar-Pbar < -cPbar} is the major concern in order to guarantee the bound $||\bar{e}(k)||$. The following lemma and theorem can be used to prove the existence of such a $P$.

\begin{lemma}\label{lemma:5.2}
Suppose that $P_j$ is a positive definite matrix, satisfying \eqref{eqn:14}. Then, the largest eigenvalue of $P_j$ is strictly greater than $1$ for all $j$, i.e., $\lambda_{max}(P_j) > 1$, $\forall j$.
\end{lemma}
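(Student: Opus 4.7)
The plan is to rewrite the discrete Lyapunov equation as a fixed-point identity for $P_j$ and then peel off an explicit positive-semidefinite ``correction'' to $I$. Specifically, from \eqref{eqn:14},
\begin{equation*}
P_j \;=\; I + \tilde{W}_j^{\top} P_j \tilde{W}_j.
\end{equation*}
Since $P_j$ is positive definite, the matrix $\tilde{W}_j^{\top} P_j \tilde{W}_j$ is positive semidefinite, which immediately yields $P_j \succeq I$ and hence $\lambda_{\max}(P_j) \geq 1$. All that is left is to rule out equality.

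Next I would argue that $\tilde{W}_j^{\top} P_j \tilde{W}_j$ is not the zero matrix, which is equivalent to strict inequality $\lambda_{\max}(P_j) > 1$ (any nonzero positive-semidefinite matrix has at least one strictly positive eigenvalue). Because $P_j \succ 0$ admits a square root $P_j^{1/2}$, one has $\tilde{W}_j^{\top} P_j \tilde{W}_j = (P_j^{1/2}\tilde{W}_j)^{\top}(P_j^{1/2}\tilde{W}_j)$, whose kernel equals $\ker(\tilde{W}_j)$. So the problem reduces to showing $\tilde{W}_j \neq 0$ for every mode $j$.

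The reduction $\tilde{W}_j \neq 0$ is the one nontrivial step, and this is where the structure of the augmented switched system carries the argument. Recall $\tilde{W}_j = W_j - \Psi$ with $\Psi = v_1 s_1 + v_2 s_2$. The left eigenvectors $s_1, s_2$ in \eqref{s1}--\eqref{s2} are nonzero only in the first and $(Nn)$th coordinates, so $\Psi$ is supported entirely in columns $1$ and $Nn$. On the other hand, the buffer-shift block in the construction of $W_j$ places the identity \underline{$I$}$\in\mathbb{R}^{Nn\times Nn}$ into the lower-left $Nn\times Nn$ block, so $W_j$ has a $1$ at position $(Nn+i,\,i)$ for every $i=1,\dots,Nn$. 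Picking any $i\notin\{1,Nn\}$ (which is possible because $Nn\geq 2$ in any nondegenerate case, and in fact $Nn\geq 3$ here), the entry $(\tilde{W}_j)_{Nn+i,\,i} = 1 - 0 = 1 \neq 0$, so $\tilde{W}_j\neq 0$ for every mode $j$.

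Combining the three steps, $\tilde{W}_j^{\top}P_j \tilde{W}_j$ is positive semidefinite and nonzero, so $\lambda_{\max}(P_j - I) = \lambda_{\max}(\tilde{W}_j^{\top}P_j \tilde{W}_j) > 0$, which is exactly $\lambda_{\max}(P_j) > 1$. The only real obstacle in this plan is the structural observation that $\Psi$ cannot cancel the buffer-shift identity inside $W_j$; once the supports of $\Psi$ and of the shift block are compared coordinate-wise, the remaining algebra is elementary.
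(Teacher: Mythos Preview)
Your proof is correct and follows essentially the same route as the paper: rewrite \eqref{eqn:14} as $P_j = I + \tilde{W}_j^{\top}P_j\tilde{W}_j$, observe the correction term is positive semidefinite, and argue it is nonzero so that $\lambda_{\max}(P_j) > 1$. The only difference is that where the paper merely asserts $\tilde{W}_j^{\top}P_j\tilde{W}_j$ is not the zero matrix (``which is not the case''), you supply an explicit structural justification by comparing the column support of $\Psi$ with the buffer-shift identity block of $W_j$; this makes your version more self-contained but does not change the underlying idea.
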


\begin{proof}
From \eqref{eqn:14}, $P_j = \tilde{W}_j^{\top}P_j\tilde{W}_j + I$, $\forall j$. Then, with the eigenvectors $y\in\mathbb{R}^{Nnq}$ of $P_j$, the largest eigenvalue of $P_j$ is given by its definition as follows:
\begin{align*}
\lambda_{max}(P_j) &= \lambda_{max}(\tilde{W}_j^{\top}P_j\tilde{W}_j + I)\\
&= \max_{\substack{y\\||y||^2=1}}y^{\top}(\tilde{W}_j^{\top}P_j\tilde{W}_j + I)y\\
&= \max_{\substack{y\\||y||^2=1}}\left(y^{\top}\tilde{W}_j^{\top}P_j\tilde{W}_jy\right) + \underbrace{y^{\top}y}_{=||y||^2=1}
\end{align*}
Since $P_j$ is a positive definite matrix, $\tilde{W}_j^{\top}P_j\tilde{W}_j$ becomes a positive semi-definite matrix at least. 
Then, the scalar term $y^{\top}\tilde{W}_j^{\top}P_j\tilde{W}_jy$ cannot be zero unless $\tilde{W}_j^{\top}P_j\tilde{W}_j$ is a zero matrix or a triangular matrix with zero diagonal components, which is not the case. Hence, it is guaranteed that $y^{\top}\tilde{W}_j^{\top}P_j\tilde{W}_jy > 0$, implying $\lambda_{max}(P_j) > 1$, $\forall j$.
\end{proof}

\begin{theorem}\label{Theorem:5.1}
Consider Lyapunov functions for \eqref{eqn:8-1} and \eqref{eqn:8} given by $V_j \triangleq e_j^{\top}P_je_j$, $j=1,2,\hdots,m$, and $V\triangleq \bar{e}^{\top}P\bar{e}$, respectively, where the matrices $P_j>0,\forall j$ and $P>0$.
By the Converse Lyapunov Theorem, we assume that the matrices $P_j$, $\forall j$, satisfies
the condition \eqref{eqn:14}.

Then, there exists a positive definite matrix $P$ such that
\begin{align}
\Lambda^{\top}P\Lambda^{} - P \leq -\varepsilon_j I, \quad \text{for some }j,\label{eqn:20}
\end{align}
where $\displaystyle \varepsilon_j \triangleq \dfrac{\lambda_{max}(P)}{\lambda_{max}(P_j)}> 0$.
\end{theorem}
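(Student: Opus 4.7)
My plan is to deploy the discrete-time Converse Lyapunov Theorem on $\Lambda = \sum_{j}\pi_{j}\tilde{W}_{j}$. Since Lemma~\ref{lemma:5.1} together with Proposition~4.1 implies $\bar{e}(k)=\Lambda^{k}e(0)\to 0$, $\Lambda$ is Schur, and so for \emph{any} prescribed $R\succ 0$ the Stein equation $\Lambda^{\top}P\Lambda - P = -R$ has a unique positive-definite solution $P$. The target inequality~\eqref{eqn:20} is exactly of this form with $R = \varepsilon_{j}I$, but because $\varepsilon_{j}=\lambda_{\max}(P)/\lambda_{\max}(P_{j})$ itself depends on the unknown $P$, the real work lies in resolving this circular definition and in identifying an index $j$ for which the bound is attainable.

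The route I would take is to fix the normalization by first solving $\Lambda^{\top}P\Lambda - P = -I$, and then designate $j^{\ast} = \arg\max_{j}\lambda_{\max}(P_{j})$. With this choice the required inequality collapses to $\varepsilon_{j^{\ast}}\le 1$, equivalently $\lambda_{\max}(P)\le\lambda_{\max}(P_{j^{\ast}})$. To verify this, I plan to compare the explicit series representations $P=\sum_{k\ge 0}(\Lambda^{\top})^{k}\Lambda^{k}$ and $P_{j}=\sum_{k\ge 0}(\tilde{W}_{j}^{\top})^{k}\tilde{W}_{j}^{k}$ using matrix convexity of $X\mapsto X^{\top}MX$ for $M\succeq 0$. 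At $k=1$ one gets $\Lambda^{\top}\Lambda \preceq \sum_{j}\pi_{j}\tilde{W}_{j}^{\top}\tilde{W}_{j}$ directly from Jensen's inequality; propagating to higher powers and summing should yield $P\preceq \sum_{j}\pi_{j}P_{j}$ in the L\"owner order, whence $\lambda_{\max}(P)\le \max_{j}\lambda_{\max}(P_{j}) = \lambda_{\max}(P_{j^{\ast}})$. The Stein equation then gives $\Lambda^{\top}P\Lambda - P = -I \preceq -\varepsilon_{j^{\ast}}I$, which is the claim.

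The step I expect to be the genuine obstacle is this inductive propagation of the Jensen-type inequality to arbitrary powers, because the $\tilde{W}_{j}$ generally do not commute and so $(\Lambda^{\top})^{k}\Lambda^{k}\preceq \sum_{j}\pi_{j}(\tilde{W}_{j}^{\top})^{k}\tilde{W}_{j}^{k}$ is not an immediate consequence of a single application of Jensen. My contingency is to interleave Jensen with the operator-monotonicity step $X^{\top}MX\preceq X^{\top}NX$ whenever $M\preceq N$, absorbing one additional power at a time; failing that, I would retreat to the weaker scalar comparison $\lambda_{\max}(P)\le\max_{j}\lambda_{\max}(P_{j})$ obtained directly from norm sub-multiplicativity applied to the series for $P$, which is still sufficient because the conclusion only involves the scalar $\lambda_{\max}(\cdot)$. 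Either way, combining positivity of $\varepsilon_{j^{\ast}}$ with Lemma~\ref{lemma:5.2} (which guarantees $\lambda_{\max}(P_{j^{\ast}})>1$, so the construction is nontrivial) closes the argument.
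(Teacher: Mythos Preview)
Your reduction is sound up to the point where you assert $\lambda_{\max}(P)\le\max_{j}\lambda_{\max}(P_{j})$, but this inequality is \emph{not true in general}, and neither of your two proposed routes establishes it. A small counterexample: take $\tilde{W}_{1}=\begin{pmatrix}0&a\\0&0\end{pmatrix}$, $\tilde{W}_{2}=\begin{pmatrix}0&0\\a&0\end{pmatrix}$, $\pi_{1}=\pi_{2}=\tfrac12$, with $\sqrt{3}<|a|<2$. Both modes are nilpotent, so each $P_{j}$ solving~\eqref{eqn:14} equals $I+\tilde{W}_{j}^{\top}\tilde{W}_{j}$ and has $\lambda_{\max}(P_{j})=1+a^{2}$. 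But $\Lambda=\tfrac{a}{2}\begin{pmatrix}0&1\\1&0\end{pmatrix}$, so $P=(1-a^{2}/4)^{-1}I$ and $\lambda_{\max}(P)=\dfrac{1}{1-a^{2}/4}>1+a^{2}$ on that range (e.g.\ $a=1.8$ gives $\lambda_{\max}(P)\approx 5.26>4.24$). Thus the L\"owner inequality $P\preceq\sum_{j}\pi_{j}P_{j}$ fails, your inductive Jensen argument cannot be repaired by interleaving monotonicity steps, and your ``scalar fallback via norm sub-multiplicativity'' also fails for the same example. The obstruction is intrinsic: averaging Schur matrices can make the spectral radius of $\Lambda$ \emph{larger} than that of every individual $\tilde{W}_{j}$, which is precisely what drives $\lambda_{\max}(P)$ past $\max_{j}\lambda_{\max}(P_{j})$.

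The paper's proof takes a completely different route and never attempts this comparison. It argues by contradiction, fixes the explicit choice $P=I$ (so that $\varepsilon_{j}=1/\lambda_{\max}(P_{j})$), and then exploits a \emph{structural} fact about the specific matrices in this application: $\det(\Lambda)=0$, so $\Lambda$ has an eigenvector $v$ with eigenvalue zero. Testing the quadratic form $v^{\top}(\Lambda^{\top}\Lambda-I+\varepsilon_{j}I)v$ on that vector gives $(\varepsilon_{j}-1)\|v\|^{2}$, which is negative by Lemma~\ref{lemma:5.2}, contradicting the assumed failure of~\eqref{eqn:20}. In other words, the paper's argument is not a general jump-linear-systems fact at all; it hinges on the singularity of $\Lambda$ inherited from the shift-register block structure of the $\tilde{W}_{j}$. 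Your proposal never invokes this structure, which is why it cannot close. If you want to salvage a constructive argument, the cleanest fix is to keep your choice $P$ solving $\Lambda^{\top}P\Lambda-P=-I$ but abandon the comparison with $P_{j^{\ast}}$; instead show directly that for the paper's matrices some index $j$ satisfies $\lambda_{\max}(P_{j})\ge\lambda_{\max}(P)$, using the same singularity of $\Lambda$ (or the explicit description of $W_{m}$ as the ``most delayed'' mode) that the paper relies on.
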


\begin{proof}
We prove by contradiction. Suppose that there exist no such $P>0$, satisfying \eqref{eqn:20}, which is equivalent to that for \textit{\textbf{all}} matrices $P>0$, the inequality $\Lambda^{\top}P\Lambda^{} - P > -\varepsilon_j I$ holds $\forall j$.
The above inequality can be interpreted in the quadratic sense.
In other words, for any non-zero vector $v$ that has a proper dimension, the following condition holds:
\begin{align}
v^{\top}\left(\Lambda^{\top}P\Lambda^{} - P  + \varepsilon_j I\right)v > 0, \: \forall j\label{eqn:18}
\end{align}
As a particular choice of $v$, we let the vector $v$ be the eigenvector of the matrix $\Lambda$, i.e., $\Lambda v = \lambda\Lambda$, where $\lambda$ is the eigenvalue of $\Lambda$. Since \eqref{eqn:18} holds for any matrix $P>0$, we let $P=I$, which results in $\varepsilon_j = \dfrac{\lambda_{max}(I)}{\lambda_{max}(P_j)} = \dfrac{1}{\lambda_{max}(P_j)}$. Hence, we have
\begin{align*}
0 &< v^{\top}\left(\Lambda^{\top}\Lambda^{} - I  + \dfrac{1}{\lambda_{max}(P_j)} I\right)v\\
&= (\underbrace{\Lambda v}_{=\lambda v})^{\top}(\underbrace{\Lambda v}_{=\lambda v}) - ||v||^2 + \dfrac{1}{\lambda_{max}(P_j)}||v||^2\\
&= \left(\lambda^2 - 1 + \dfrac{1}{\lambda_{max}(P_j)}\right)||v||^2, \quad \forall j.
\end{align*}

From the structure of the matrix $\Lambda$, it can be shown that $\text{det}(\Lambda)=0$. Therefore, one of the eigenvalues $\lambda$ is zero.
Moreover, Lemma \ref{lemma:5.2} states that $\dfrac{1}{\lambda_{max}(P_j)} < 1$, $\forall j$. As a consequence, with $\lambda
=0$, we have
\begin{align*}
0 &< \underbrace{\left(- 1 + \dfrac{1}{\lambda_{max}(P_j)}\right)}_{ < 0}\underbrace{||v||^2}_{>0} < 0, \quad \forall j.
\end{align*}
which is a \textit{contradiction}.
\end{proof}

\begin{remark}
Proposition \ref{proposition:a<abar => boundedness} provides a very efficient way to bound the convergence rate for $||\bar{e}(k)||$. According to the proposed methods, it is unnecessary to compute the matrix $\Lambda$ and to keep all matrices $W_j$, $j=1,2,\hdots,m$ since $||\bar{e}(k)||$ is bounded by the proposed Lyapunov function. 
Also, Theorem \ref{Theorem:5.1} guarantees the condition \eqref{Abar'PbarAbar-Pbar < -cPbar}, which is assumed in Proposition \ref{proposition:a<abar => boundedness}. 

Note that we specify the modal matrix $W_m$ in \eqref{eqn:switched} as the most delayed case -- all PEs use the oldest value in the buffer. Therefore, it can be inferred that $\lambda_{max}(P_m) \geq \lambda_{max}(P_j)$, $\forall j$, which results in 
\begin{align}
|| \bar{e}(k)||^2 < K\Big(1-\dfrac{1}{\lambda_{max}(P_m)}\Big)^k ||e(0)||^2,\label{eqn:22-1}
\end{align}
where $K$ is a positive constant. 
Therefore, the \textit{only information} required to compute the convergence rate of $||\bar{e}(k)||$, is the matrix $W_m$ with the corresponding positive definite matrix $P_m$. As a result, the rate of convergence can be calculated by the proposed methods \textit{without any scalability problems}.
\end{remark}

\section{Error Analysis}
In this section, we investigate the error probability, which quantifies the deviation of the random vector $X(k)$ from its steady-state value $X_{ss}$ in probability. To measure this error probability, the Markov inequality given by $\mathbf{Pr}\big(X\geq \epsilon\big) \leq \dfrac{\mathbb{E}[X]}{\epsilon}$, where $X$ is a nonnegative random variable and $\epsilon$ is a positive constant, is used.
First of all, we investigate the term $\vect\left(e(k)e(k)^{\top}\right)$ as follows:
\begin{align}
&\vect\left(e(k)e(k)^{\top}\right) =  \vect\left(\tilde{W}_{\sigma_{k-1}}e(k-1)e(k-1)^{\top}\tilde{W}_{\sigma_{k-1}}^{\top}\right)\nonumber\\
&\qquad = \big(\tilde{W}_{\sigma_{k-1}}\otimes \tilde{W}_{\sigma_{k-1}}\big)\vect\big(e(k-1)e(k-1)^{\top}\big).\label{eqn:21}
\end{align}
In the second equality of above equation, we used the property that $\vect(ABC) = (C^{\top}\otimes A)\vect(B)$.

By taking the expectation with new definitions $y(k) \triangleq \vect\left( e(k)e(k)^{\top}\right)$, $\bar{y}(k) \triangleq \mathbb{E}[y(k)]$, and $\Gamma_{\sigma_{k}}\triangleq \tilde{W}_{\sigma_k}\otimes \tilde{W}_{\sigma_k}$, \eqref{eqn:21} becomes
\begin{align*}
\bar{y}(k) &\triangleq \mathbb{E}[y(k)] = \mathbb{E}\left[\Gamma_{\sigma_{k-1}}y(k-1)\right]\\
&= \sum_{r=1}^{m}\mathbb{E}\left[ \Gamma_{\sigma_{k-1}} y(k-1) \:\Big|\: \sigma_{k-1}=r \right]\mathbf{Pr}(\sigma_{k-1}=r)\\
&= \sum_{r=1}^{m} \pi_r\Gamma_r \mathbb{E}\left[ y(k-1)\right],
\end{align*}
resulting in $\bar{y}(k) = \left(\sum_{r=1}^{m} \pi_r\Gamma_{r}\right) \bar{y}(k-1)$, where in the second line we applied the law of total probability and the last equality holds by $\mathbf{Pr}(\sigma_{k-1}=r) = \pi_r$ for i.i.d. switching.

By the exactly same argument given in Lemma \ref{lemma:5.1} and Proposition \ref{proposition:a<abar => boundedness}, the upper bound for $\bar{y}(k)$ is obtained as follows:
{\small
\begin{align}
||\bar{y}(k)|| < K\bigg(1-\dfrac{1}{\lambda_{max}(\tilde{P}_m)}\bigg)^{k/2} ||y(0)||, \quad \forall k\in\mathbb{N},\label{eqn:18-1}
\end{align}}
where $K$ is some positive constant and $\tilde{P}_m$ is a positive definite matrix, satisfying the condition $\Gamma_m^{\top}\tilde{P}_m\Gamma_m - \tilde{P}_m = -I$. 
However, unlike the positive definite matrix $P_m\in\mathbb{R}^{Nnq\times Nnq}$ in \eqref{eqn:14}, the dimension of the matrix $\tilde{P}_m$ is given by $\tilde{P}_m\in\mathbb{R}^{(Nnq)^2\times (Nnq)^2}$, which may be large in size, and hence incurs computational intractabilities to obtain such a $\tilde{P}_m$. 
Therefore, we introduce the following proposition and theorem in order to further facilitate the computation of $\lambda_{max}(\tilde{P}_m)$ as follows.

\begin{proposition}\label{proposition:4.2}
Consider a positive definite matrix $\tilde{P}_m$, satisfying the condition $\Gamma_m^{\top}\tilde{P}_m\Gamma_m - \tilde{P}_m = -I$, where $\Gamma_m \triangleq \tilde{W}_m\otimes \tilde{W}_m$, and $\tilde{W}_m$ is any real square matrix.
If we assume that there exist finite, positive constants $k_0$, $c_0$, and $c_1$ such that
\begin{eqnarray}
&1 \,\leq\, || \tilde{W}_m^{k} ||^4 \,\leq\, c_0,&\quad \text{for } k\in[0,k_0), \label{eqn:15-1}\\
&|| \tilde{W}_m^{k} ||^4 \:\:\leq c_1\: < 1,&\quad \text{for } k\in[k_0,\infty),\label{eqn:15-2}
\end{eqnarray}
then, the largest eigenvalue of $\tilde{P}_m$ is bounded by the following function:
\begin{align}
\lambda_{max}(\tilde{P}_m) < \sum_{k=0}^{\infty}|| \tilde{W}_m^k ||^4 \leq k_0c_0\left(\dfrac{1}{1-c_1}\right),\label{eqn:15}
\end{align}
\end{proposition}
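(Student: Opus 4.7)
The plan is to exploit the fact that the hypothesis equation $\Gamma_m^{\top}\tilde{P}_m\Gamma_m - \tilde{P}_m = -I$ is a discrete-time Lyapunov equation whose unique positive-definite solution admits a closed-form series representation, and then to pass a norm bound through the Kronecker structure.

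First, I would observe that the bounds in \eqref{eqn:15-1}--\eqref{eqn:15-2} imply $\|\tilde{W}_m^k\| \to 0$, hence $\tilde{W}_m$ (and therefore $\Gamma_m = \tilde{W}_m \otimes \tilde{W}_m$) is Schur stable. Standard discrete Lyapunov theory then gives the explicit solution
\begin{align*}
\tilde{P}_m \;=\; \sum_{k=0}^{\infty} (\Gamma_m^{\top})^k \, \Gamma_m^k \;=\; \sum_{k=0}^{\infty} (\Gamma_m^k)^{\top}\,\Gamma_m^k,
\end{align*}
where convergence of the series is guaranteed by Schur stability. Because each term is positive semi-definite, the largest eigenvalue of the sum is bounded by the sum of spectral norms, i.e., $\lambda_{\max}(\tilde{P}_m) \leq \sum_{k=0}^{\infty}\|\Gamma_m^k\|^2$, where $\|\cdot\|$ denotes the spectral norm.

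Next, I would use the Kronecker identity $(A\otimes B)^k = A^k \otimes B^k$ together with the property $\|A\otimes B\| = \|A\|\,\|B\|$ for the spectral norm. This gives $\|\Gamma_m^k\| = \|\tilde{W}_m^k \otimes \tilde{W}_m^k\| = \|\tilde{W}_m^k\|^2$, so that $\|\Gamma_m^k\|^2 = \|\tilde{W}_m^k\|^4$ and therefore
\begin{align*}
\lambda_{\max}(\tilde{P}_m) \;\leq\; \sum_{k=0}^{\infty}\|\tilde{W}_m^k\|^4,
\end{align*}
which is the left-hand bound of \eqref{eqn:15}.

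Finally, to establish the quantitative tail estimate, I would partition the index set into blocks of length $k_0$ by writing each $k \geq 0$ as $k = nk_0 + r$ with $0 \leq r < k_0$ and $n \geq 0$. Submultiplicativity of the spectral norm gives $\|\tilde{W}_m^{nk_0+r}\|^4 \leq \|\tilde{W}_m^{k_0}\|^{4n} \cdot \|\tilde{W}_m^r\|^4$. Using \eqref{eqn:15-2} at $k = k_0$ bounds the first factor by $c_1^n$, and using \eqref{eqn:15-1} bounds the second factor by $c_0$. Summing over $r \in \{0,\dots,k_0-1\}$ and then over $n \geq 0$ yields the geometric series $k_0 c_0 \sum_{n=0}^{\infty} c_1^n = k_0 c_0/(1-c_1)$, which completes the bound. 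The only conceptual obstacle is recognizing the Lyapunov-series representation and pushing the Kronecker spectral-norm identity through; once that is in hand, the block-geometric estimate is routine.
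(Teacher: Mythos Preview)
Your proposal is correct and follows essentially the same route as the paper: the explicit series solution $\tilde{P}_m=\sum_{k\ge 0}(\Gamma_m^k)^{\top}\Gamma_m^k$ of the discrete Lyapunov equation, the Kronecker identities $\Gamma_m^k=\tilde{W}_m^k\otimes\tilde{W}_m^k$ and $\|\tilde{W}_m^k\otimes\tilde{W}_m^k\|=\|\tilde{W}_m^k\|^2$, and a block decomposition of the tail into chunks of length $k_0$ summed as a geometric series in $c_1$. The only cosmetic difference is that the paper bounds each summand by $\|\tilde W_m^k\|^4 I$ in the Loewner order before taking $\lambda_{\max}$, whereas you invoke subadditivity of $\lambda_{\max}$ over PSD summands directly; these are equivalent.
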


\begin{proof}
The leftmost inequality in \eqref{eqn:15} can be proved as follows. The positive definite matrix $\tilde{P}_m$ satisfying the condition $\Gamma_m^{\top}\tilde{P}_m\Gamma_m - \tilde{P}_m = -I$, is analytically computed by $\tilde{P}_m = \sum_{k=0}^{\infty}\left({\Gamma_m^{\top}}^k\right)I\left(\Gamma_m^k\right) = \sum_{k=0}^{\infty}{\Gamma_m^{\top}}^k\Gamma_m^k$. 
Then, for a given matrix $\Gamma_m\triangleq \tilde{W}_m\otimes \tilde{W}_m$, we have 
\begin{align}
&{\Gamma_m^{\top}}^k\Gamma_m^k 
< \rho({\Gamma_m^{\top}}^{k}\Gamma_m^k)I = \rho({\Gamma_m^k}^{\top}\Gamma_m^k)I 
= \sigma_{max}^2(\Gamma_m^k)I\nonumber\\
&\qquad\:= || \Gamma_m^k||^2I
= || (\tilde{W}_m\otimes \tilde{W}_m)^k ||^2I
= || \tilde{W}_m^{k} ||^4I,\label{eqn:22}
\end{align}
where $\rho(\cdot)$ and $\sigma_{max}(\cdot)$ denote the spectral radius and the spectral norm, respectively.
For equality conditions in \eqref{eqn:22}, we used the known property that $\sqrt{\rho({\Gamma_m^k}^{\top}\Gamma_m^k)} = \sigma_{max}(\Gamma_m^k) = ||\Gamma_m^k||$ and $||(\tilde{W}_m\otimes \tilde{W}_m)^k|| = ||\tilde{W}_m^k\otimes \tilde{W}_m^k|| = ||\tilde{W}_m^k||^2$, $\forall k\in\mathbb{N}_0$.
By summing up from $k=0$ to $\infty$, and then taking the largest eigenvalue in \eqref{eqn:22}, we have $\lambda_{max}(\tilde{P}_m) = \lambda_{max}\left(\sum_{k=0}^{\infty}{\Gamma^{\top}}^k\Gamma^k\right) < \sum_{k=0}^{\infty}||\tilde{W}_m^k||^4$.

For the rightmost inequality in \eqref{eqn:15}, the assumptions in \eqref{eqn:15-1}-\eqref{eqn:15-2} result in
\begin{align*}
&\sum_{k=0}^{\infty}||\tilde{W}_m^k||^4 = \underbrace{\sum_{k=0}^{k_0-1}||\tilde{W}_m^k||^4}_{\leq k_0c_0} + \sum_{k=k_0}^{\infty}||\tilde{W}_m^k||^4\\
&\leq k_0c_0 + \sum_{k=k_0}^{2k_0-1}||\tilde{W}_m^k||^4 + \sum_{\substack{k=2k_0}}^{3k_0-1}||\tilde{W}_m^k||^4 + \cdots\\
&= k_0c_0 + \sum_{k=0}^{k_0-1}||\tilde{W}_m^{(k_0+k)}||^4 + \sum_{\substack{k=2k_0}}^{3k_0-1}||\tilde{W}_m^k||^4 +\cdots\\
&\leq k_0c_0 + \underbrace{||\tilde{W}_m^{k_0}||^4}_{\leq c_1}\underbrace{\sum_{k=0}^{k_0-1}||\tilde{W}_m^{k}||^4}_{\leq k_0c_0} + \sum_{k=0}^{k_0-1}||\tilde{W}_m^{(2k_0+k)}||^4 +\cdots\\
&\leq k_0c_0 + k_0c_0c_1 + \underbrace{||\tilde{W}_m^{2k_0}||^4}_{\leq c_1^2}\underbrace{\sum_{k=0}^{k_0-1}||\tilde{W}_m^{k}||^4}_{\leq k_0c_0} +\cdots\\
&\leq k_0c_0 + k_0c_0c_1 + k_0c_0c_1^2 + \cdots\\
&= k_0c_0\left(\sum_{n=0}^{\infty}c_1^n\right) = k_0c_0\left(\dfrac{1}{1-c_1}\right).
\end{align*}
Hence, we have $\displaystyle\sum_{k=0}^{\infty}|| \tilde{W}_m^k ||^4 \leq k_0c_0\left(\dfrac{1}{1-c_1}\right)$.
\end{proof}

\begin{theorem}\label{theorem:4.1}
Consider a \textit{stable}, i.i.d. jump linear system with subsystem dynamics $\tilde{W}_j$ given in \eqref{eqn:8}. 
Then, the probability of $||e(k)||^2 > \epsilon$, where $\epsilon$ is some positive constant, is given by
\begin{align}
\textbf{Pr}\bigg(||e(k)||^2 > \epsilon \bigg) \leq \min(1,\beta), \quad k\in\mathbb{N}_0,\label{eqn:19}
\end{align}
where $\beta \triangleq \dfrac{\sqrt{n}K}{\epsilon}\Bigg(1 - \dfrac{1-c_1}{k_0c_0}\Bigg)^{k/2}||y(0)||$, $K>0$ is a constant, $c_0,c_1,k_0$ are positive constants such that the conditions \eqref{eqn:15-1}-\eqref{eqn:15-2} are satisfied.
\end{theorem}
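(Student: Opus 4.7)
The plan is to reduce the bound to a direct application of Markov's inequality, combined with the mean-square decay already established in \eqref{eqn:18-1} and the estimate of $\lambda_{max}(\tilde{P}_m)$ from Proposition \ref{proposition:4.2}. Since $\|e(k)\|^2$ is a non-negative random variable, Markov's inequality immediately gives $\textbf{Pr}(\|e(k)\|^2 > \epsilon) \leq \mathbb{E}[\|e(k)\|^2]/\epsilon$, so the task reduces to controlling $\mathbb{E}[\|e(k)\|^2]$ in terms of the quantity $\bar{y}(k)$ that has already been bounded in the excerpt.

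The bridge to $\bar{y}(k)$ is built through the identity $\|e(k)\|^2 = \tr(e(k)e(k)^{\top})$. Taking expectations and pulling them inside the trace yields $\mathbb{E}[\|e(k)\|^2] = \tr(Y(k))$ with $Y(k) \triangleq \mathbb{E}[e(k)e(k)^{\top}]$. By construction $\bar{y}(k) = \vect(Y(k))$, so $\|\bar{y}(k)\| = \|Y(k)\|_F$. A standard Cauchy-Schwarz estimate applied to the diagonal of $Y(k)$ then gives $|\tr(Y(k))| \leq \sqrt{n}\,\|Y(k)\|_F = \sqrt{n}\,\|\bar{y}(k)\|$, where $n$ denotes the length of $e(k)$. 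This is the step that produces the $\sqrt{n}$ prefactor appearing in $\beta$.

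With this in hand, the proof closes by substitution. Applying \eqref{eqn:18-1} bounds $\|\bar{y}(k)\|$ by $K\bigl(1 - 1/\lambda_{max}(\tilde{P}_m)\bigr)^{k/2}\|y(0)\|$. Proposition \ref{proposition:4.2} provides $\lambda_{max}(\tilde{P}_m) \leq k_0 c_0/(1-c_1)$, hence $1 - 1/\lambda_{max}(\tilde{P}_m) \leq 1 - (1-c_1)/(k_0 c_0)$; plugging these into the Markov bound reproduces exactly the expression $\beta$ in the theorem statement. Finally, since any probability is trivially bounded by one, $\beta$ may be replaced by $\min(1,\beta)$, establishing \eqref{eqn:19}.

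There is no deep obstacle here, as essentially all of the analytical effort has been discharged in Lemma \ref{lemma:5.1}, Proposition \ref{proposition:a<abar => boundedness}, and Proposition \ref{proposition:4.2}; the argument is essentially an assembly of these facts through Markov. The only point requiring a moment of care is the trace-to-Frobenius inequality responsible for the $\sqrt{n}$ factor, together with a small notational check that the $n$ appearing in the theorem statement is being used as a generic stand-in for the dimension of $e(k)$ (namely $Nnq$ in the paper's global notation), rather than for the per-PE grid count $n$ introduced in Section III.
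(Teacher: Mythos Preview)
Your proposal is correct and follows essentially the same route as the paper: the paper also bounds $\mathbb{E}[\|e(k)\|^2]$ by $\sqrt{n}\,\|\bar{y}(k)\|$ (via the identity $\|e(k)\|^2=\vect(I)^{\top}y(k)$ and Cauchy--Schwarz, which is exactly your trace--Frobenius step), then substitutes \eqref{eqn:18-1} and Proposition~\ref{proposition:4.2} before applying Markov's inequality. Your remark that the $n$ in the statement is being used as a stand-in for the full state dimension $Nnq$ is also on point.
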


\begin{proof} At first, we consider the following equality condition given by 
\begin{align}
&||e(k)||^2 = e(k)^{\top}e(k) = \tr(e(k)^{\top}e(k)) = \tr\big(I\left(e(k)e(k)^{\top}\right)\big)\nonumber\\
&= \vect(I)^{\top}\vect(e(k)e(k)^{\top}) = \vect(I)^{\top}y(k) ,\label{eqn:29}
\end{align}
where we used the cyclic permutation property for the trace operator in the first line and  the equality in the second line holds by the property $\tr(X^{\top}Y) = \vect(X)^{\top}\vect(Y)$ for any square matrix $X,Y\in\mathbb{R}^{n\times n}$.

We take the expectation in both sides of \eqref{eqn:29}, which leads to
\begin{align}
\mathbb{E}\big[||e(k)||^2\big] &=  \vect(I)^{\top} \mathbb{E}\big[y(k)\big] = \vect(I)^{\top}\bar{y}(k).\label{eqn:30}
\end{align}

Since the term $\mathbb{E}\big[||e(k)||^2\big]$ is a scalar value, taking the Euclidean norm returns the same value. Hence, applying the Euclidean norm in \eqref{eqn:30} results in
\begin{align}
\mathbb{E}\big[||e(k)||^2\big] &= || \vect(I)^{\top}\bar{y}(k) ||\nonumber\\
&\leq ||\vect(I)^{\top}||\cdot ||\bar{y}(k)|| = \sqrt{n}\cdot ||\bar{y}(k)||\label{eqn:31}.
\end{align}

Now, plugging \eqref{eqn:18-1} and \eqref{eqn:15} into \eqref{eqn:31} leads to
\begin{align*}
\mathbb{E}\big[||e(k)||^2\big] < \sqrt{n}K\Bigg(1 - \dfrac{1-c_1}{k_0c_0}\Bigg)^{k/2}||y(0)||.
\end{align*}
Finally, by applying the Markov inequality the above equation ends up with
\begin{align*}
\textbf{Pr}\bigg(||e(k)||^2 > \epsilon \bigg) &\leq \dfrac{\mathbb{E}\big[||e(k)||^2\big]}{\epsilon} < \beta,
\end{align*}
where $\beta \triangleq \dfrac{\sqrt{n}K}{\epsilon}\Bigg(1 - \dfrac{1-c_1}{k_0c_0}\Bigg)^{k/2}||y(0)||$.

Since the probability cannot exceed one, we have $\textbf{Pr}\bigg(||e(k)||^2 > \epsilon \bigg) \leq \min(1,\beta)$
\end{proof}

Theorem \ref{theorem:4.1} represents the error probability for a given bound $\epsilon$. Since $e(k)$ is a time-varying variable, the probability $\textbf{Pr}\left(||e(k)||^2>\epsilon\right)$ also changes with respect to time. Starting from a given initial condition $y(0)$, this probability will converge to zero if $\Bigg(1 - \dfrac{1-c_1}{k_0c_0}\Bigg)<1$.

\section{Simulations}
In order to test the proposed methods, simulation was carried out for the one-dimensional heat equation. We implemented the asynchronous parallel algorithm with $\mathtt{CUDA\:\:C\text{++}\:\:programming}$ on $\mathtt{nVIDIA\: Tesla^{{\scriptsize{TM}}}\: C2050}$ GPU, which has $448$ $\mathtt{CUDA\: cores}$. 
The simulations were performed with the following parameters:
\begin{itemize}
\item{Simulation Parameters:}
\begin{eqnarray*}
&\Delta x =& 0.1, \Delta t = 0.01, \alpha = 0.5, r = \alpha\dfrac{\Delta t}{\Delta x^2} = 0.5\\
&I.C.:& u_i = \text{cos}^2\bigg( \dfrac{3\pi i}{2(N-1)} \bigg), \:i=1,2,\hdots,N\\
&B.C.:& u_1(k) = 1,\: u_N(k) = 0, \: \forall k
\end{eqnarray*}
\item{Buffer length:} $q=3$
\item{Number of PEs:} $N=100$.
\item{Number of grid points in PE:} $n=1$\\
\end{itemize}

\begin{figure}
\centering
\includegraphics[scale=0.2]{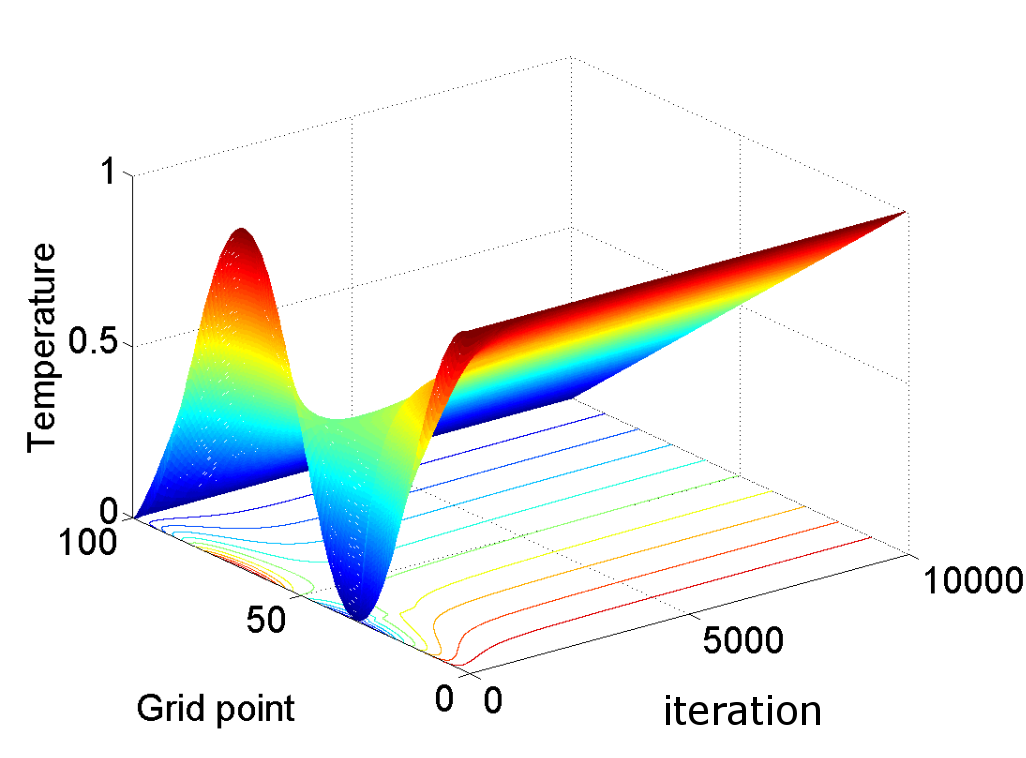}
\caption{The spatio-temporal change of the temperature. Initially, the temperature was given by the cosine square function. The total grid points are $100$, and the simulation was terminated when $k=10000$.}\label{fig.2}\vspace{-0.1in}
\end{figure}

For a given initial temperature, the spatio-temporal evolution of the state is presented in Fig. \ref{fig.2}. As time $k$ increases, the curved shape of the temperature, given as a cosine square function initially, flattens out. This simulation represents the synchronous case.

\begin{figure}
\centering
\subfigure[Stability]{\includegraphics[scale=0.29]{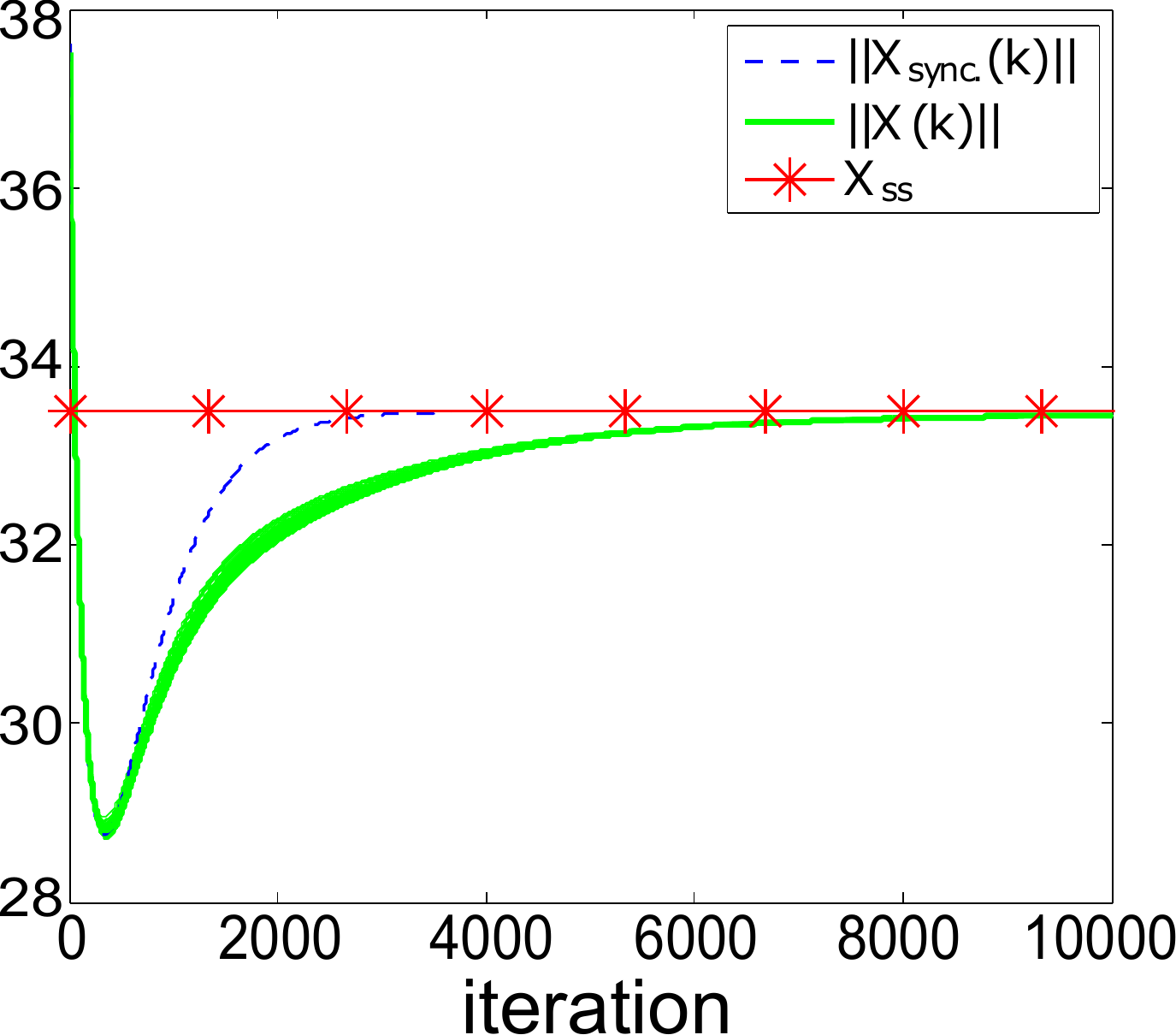}}
\subfigure[Convergence rate]{\includegraphics[scale=0.34]{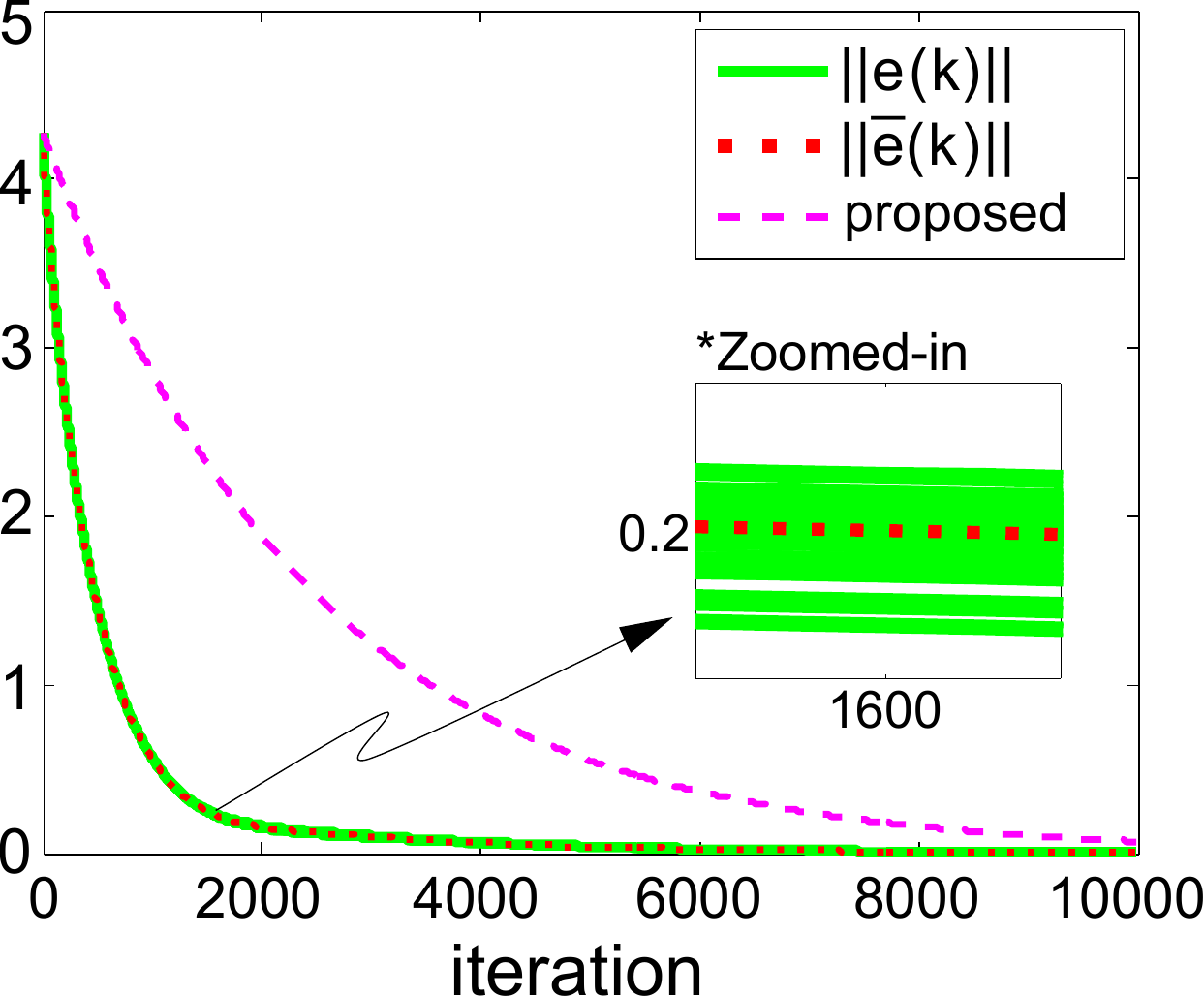}}
\caption{
The results for the stability and convergence rate. (a) The solid lines represent the ensembles of total $300$ simulations. The synchronous case is given by dashed line. The steady-state is depicted by starred line. (b) The solid and dotted lines represent 300 ensembles for $||e(k)||$ and the normed empirical mean $||\bar{e}(k)||$, respectively. The dashed line shows the upper bound of $||\bar{e}(k)||$ from the proposed Lyapunov function, respectively.}\label{fig.3}\vspace{-0.15in}
\end{figure}

In Fig. \ref{fig.3} (a), the ensemble of the trajectories is shown for the asynchronous algorithm. The solid lines show the trajectories of total $300$ simulations. Due to the randomness in the asynchronous algorithm, the trajectories differ from each other. For a reference, the synchronous scheme is also shown by a dashed line. Although it seems that the synchronous scheme converges faster with respect to the given iteration step, the physical simulation time may take more because the idle time is necessary at each iteration in the synchronous case.
As the proposed method guarantees the stability through the common eigenvectors, both synchronous and asynchronous trajectories converged to the same steady-state value $X_{ss}$, depicted by starred line. 

Next, we present the result for the convergence rate of the asynchronous algorithm. 
We assume that the switching probability $\Pi$ has the form of an i.i.d. jump process. Fig. \ref{fig.3} (b) shows the convergence rate of $||\bar{e}(k)||$, which describes how fast the expected value of the state converges to $X_{ss}$.
The solid lines are 300 sample trajectories of $||e(k)||$, starting from the given initial condition: $e(0) = X(0) - X_{ss}$.
The dotted line depicts the time history of the normed \textit{empirical} mean $||\bar{e}(k)||$, whereas the dashed line shows an upper bound by the proposed Lyapunov method \eqref{eqn:22-1}. Note that $||e(k)||$ is a random variable, and hence the normed empirical mean $||\bar{e}(k)||$ was obtained by averaging the data over $300$ simulations. In the proposed method, however, it is not necessary to execute the simulation multiple times. 

Fig. \ref{fig.4} represents the result for the error probability with respect to time and $\epsilon$. 
For different values of $\epsilon$, Fig. \ref{fig.4} (a) and (b) describe the time history of the error probabilities.
The solid line denotes the empirical probability obtained from data -- i.e., the number of samples satisfying $||e(k)||^2>\epsilon$ divided by the total number of samples. The dashed line depicts the Markov inequality, computed from $\frac{\mathbb{E}\left[||e(k)||^2\right]}{\epsilon}$, where $\mathbb{E}\left[||e(k)||^2\right]$ is obtained by the statistics. Finally, the cross symbols mean the upper bound by the proposed method.
As shown in Fig. \ref{fig.4} (a) and (b), the probabilities for all cases converge to zero since the error is asymptotically convergent.

\begin{figure}
\begin{center}
\subfigure[$\epsilon = 0.01$]{\includegraphics[scale=0.3]{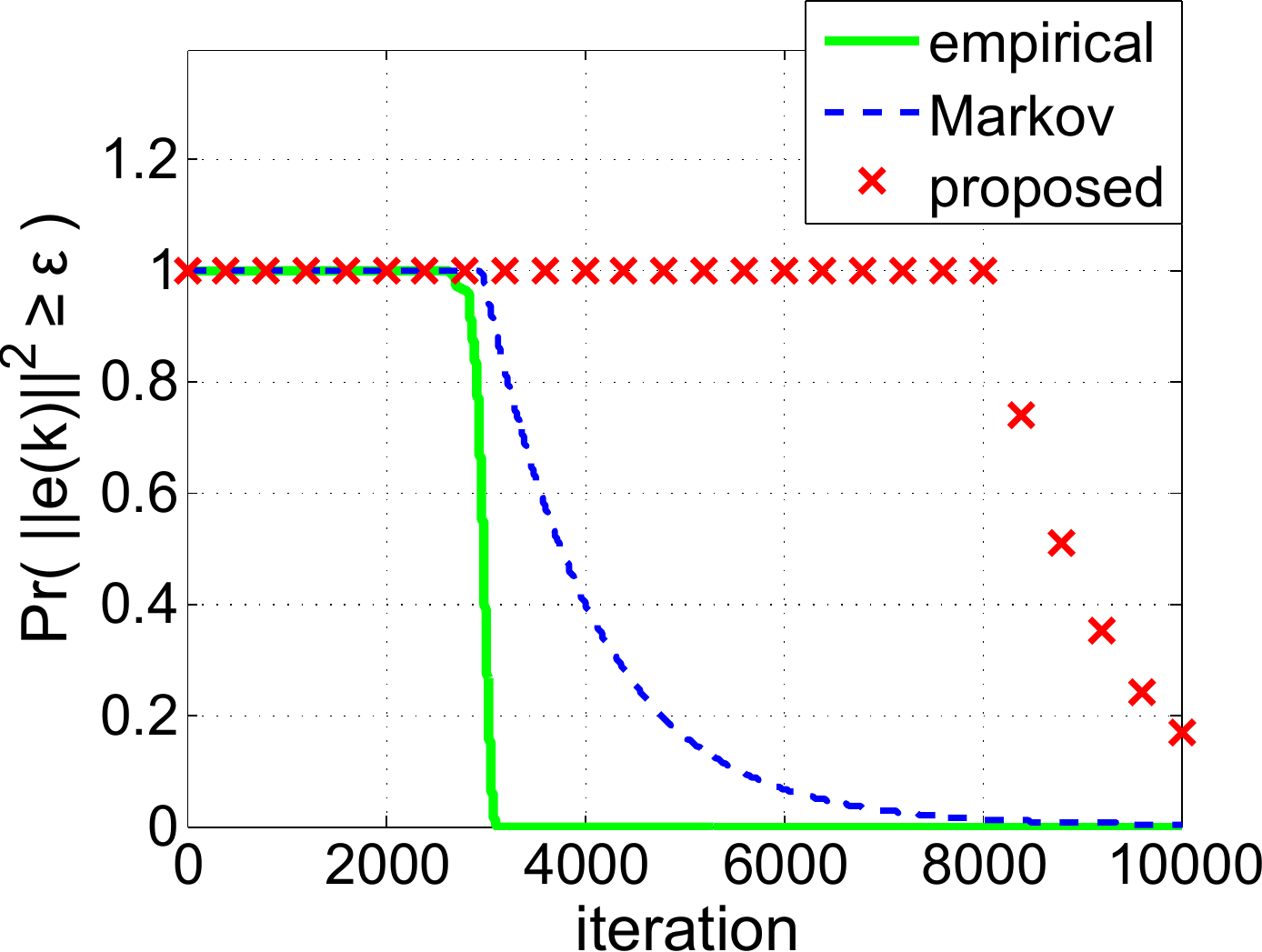}}
\subfigure[$\epsilon = 1$]{\includegraphics[scale=0.3]{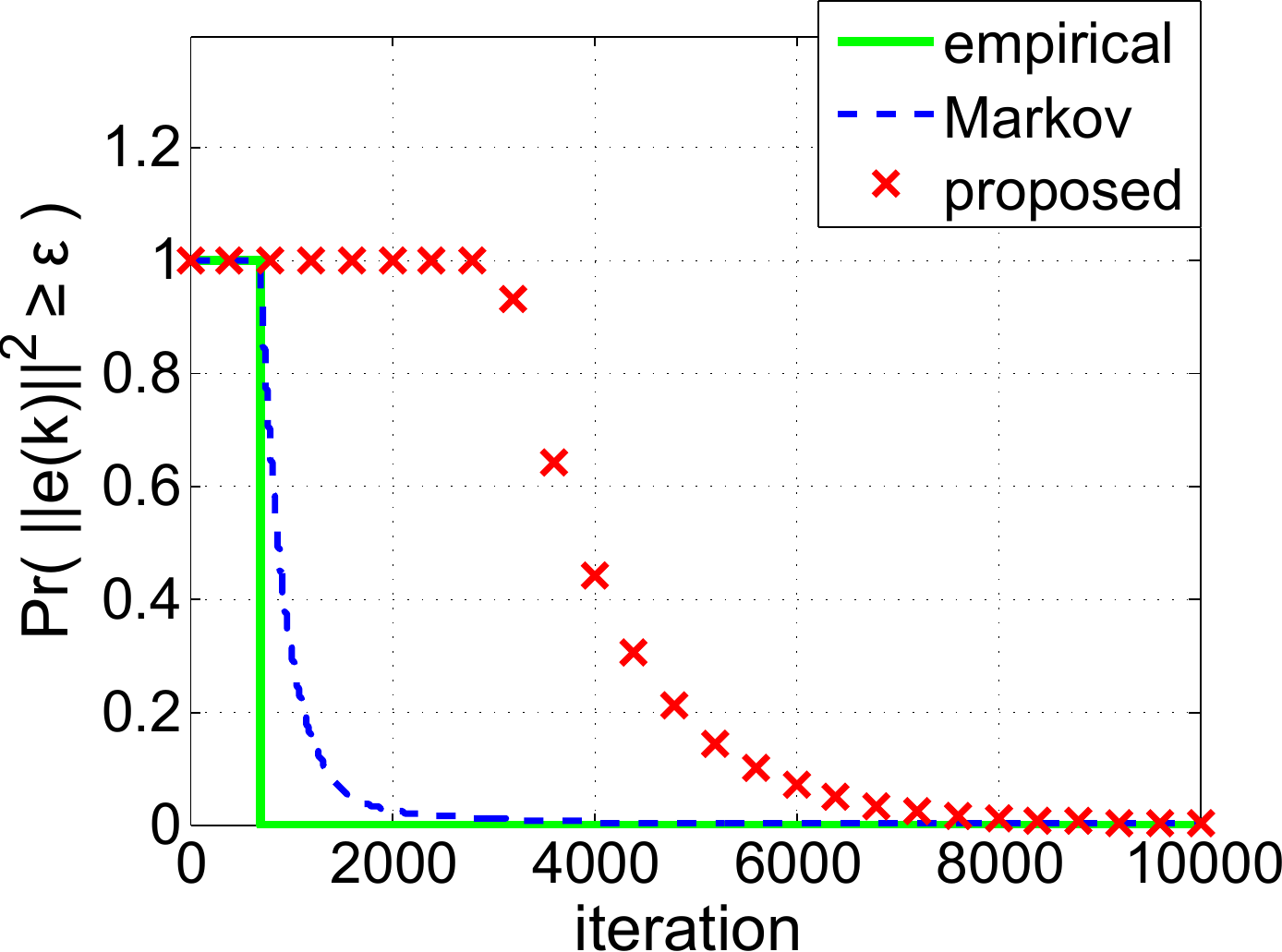}}
\subfigure[$T = 2\times 10^{-6}$]{\includegraphics[scale=0.3]{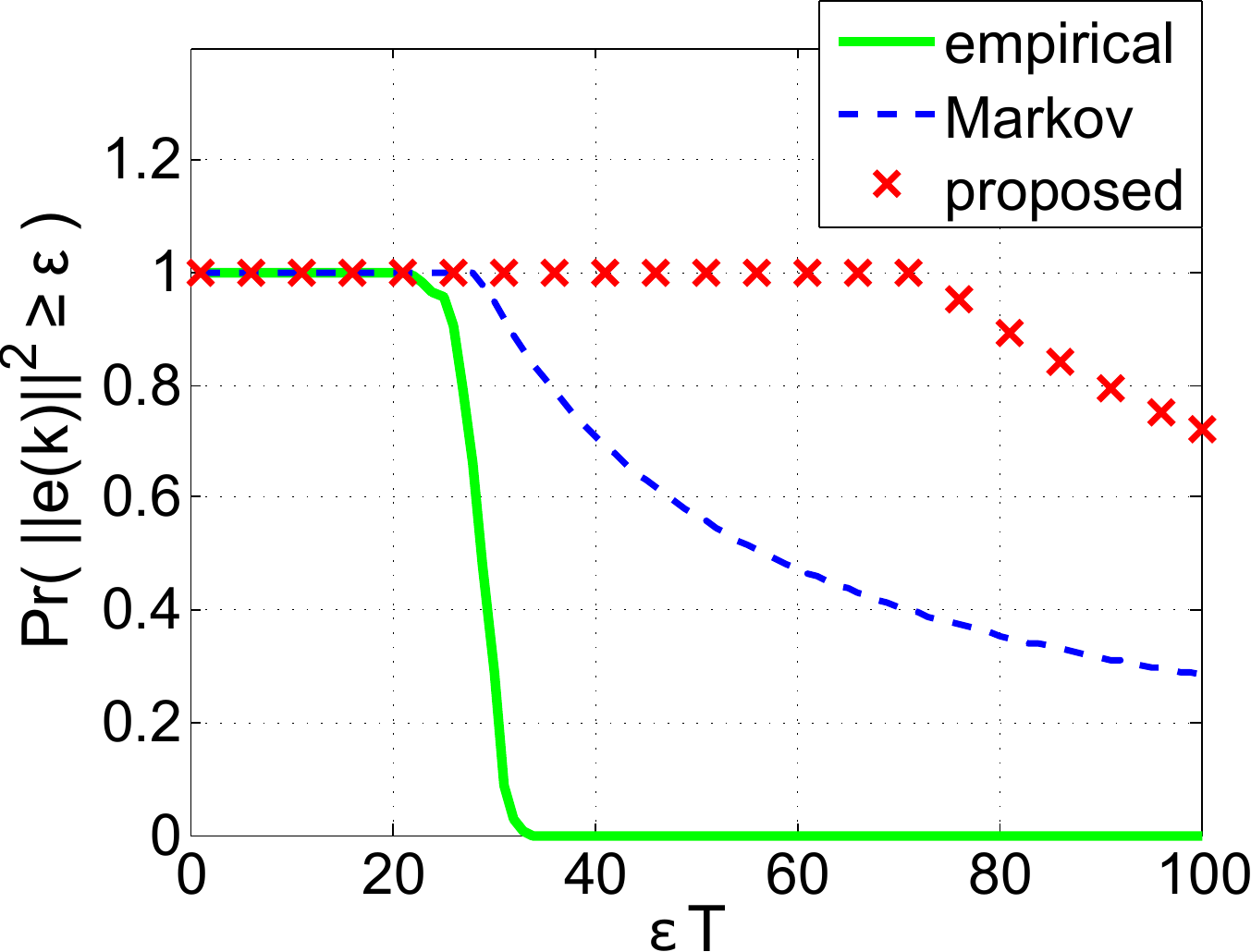}}
\subfigure[$T = 12\times 10^{-6}$]{\includegraphics[scale=0.3]{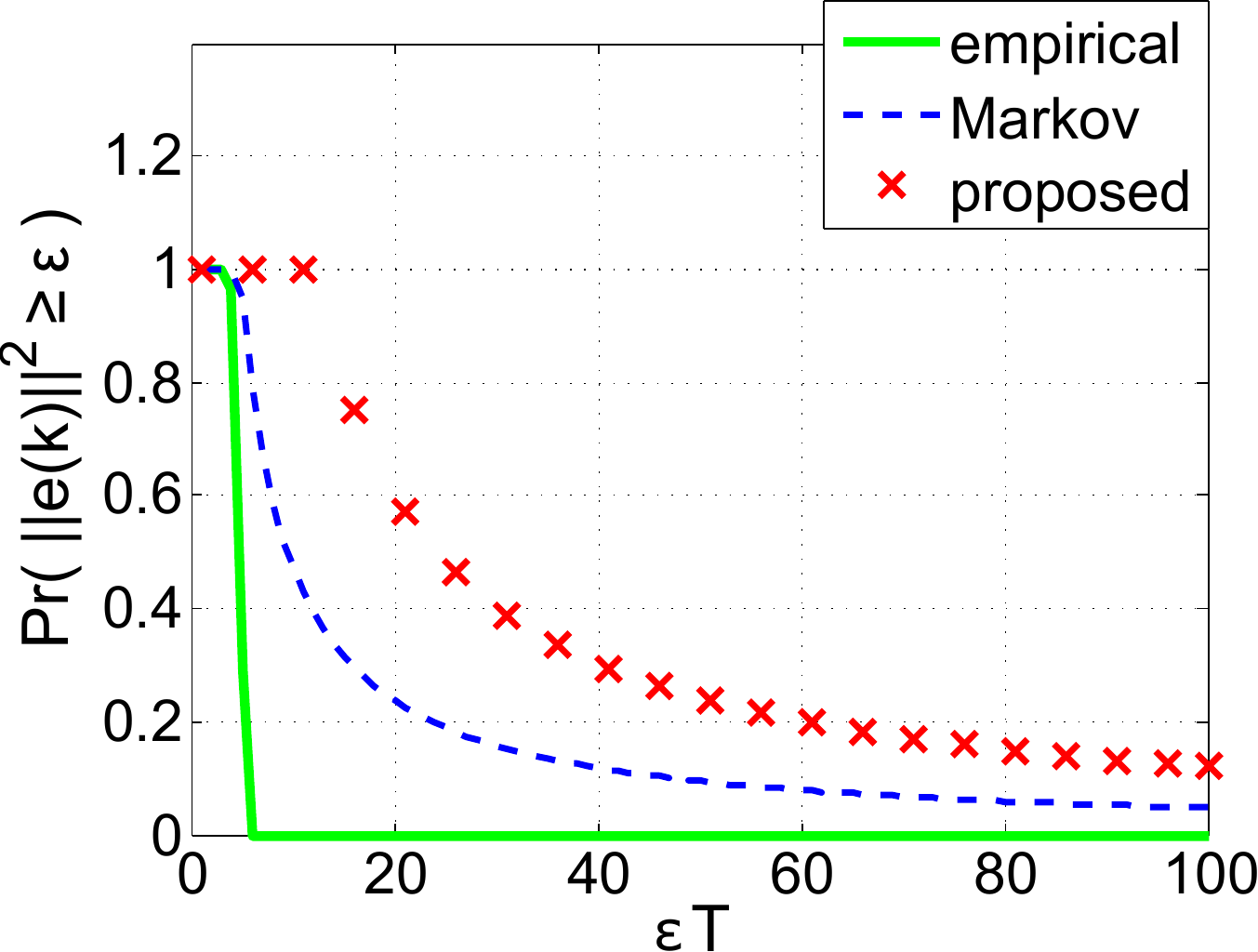}}
\caption{Error probability with respect to time (a), (b) and with respect to $\epsilon$ (c), (d). The solid line and dashed line represent empirical error probability and empirical Markov inequality, respectively. The cross symbol denotes the upper bound for the error probability by the proposed method.} \label{fig.4}\vspace{-0.15in}
\end{center}
\end{figure}

On the other hands, Fig. \ref{fig.4} (c), (d) show the error probability with respect to $\epsilon$ at fixed time instance. In this result, the time is fixed at $k=9000$ out of total $10000$ iteration times, and the probability is computed while increasing $\epsilon$ values. In Fig. \ref{fig.4} (c) and (d), $\epsilon T$ is given by the index along $x$-axis, where the value of $T$ is given in Fig. \ref{fig.4} (c) and (d), respectively.
In both cases, the error probabilities decrease as $\epsilon$ increases.

Although the proposed methods provide a conservative bound, it does not require executing the code multiple times to predict the convergence rate or the error probability. In addition to that, the proposed methods are carried out in a computationally efficient manner without storing all subsystem matrices. In this example, we have  $m=3^{2(100-2)}\approx 3^{200}$, and keeping $3^{200}$ numbers of matrices is intractable in the real implementation. The proposed method, however, guarantees the convergence rate and the error probability, without any scalability issues.  Therefore, the presented methods provide a computationally efficient tool to analyze the asynchronous numerical schemes.

\section{Conclusions}
This paper studied the stability, convergence rate, and error probability of the asynchronous parallel numerical algorithm. The asynchronous algorithm achieves better performance in terms of the total simulation time, particularly when massively parallel computing is required because it doesn't wait for synchronization across PEs. 
In order to analyze the asynchronous numerical algorithm, we adopted the switched linear system framework. Although modeling of massively parallel numerical algorithms as switched dynamical systems results in a very large number of modes, we developed new methods that circumvent this scalability issue.
While the results presented here are based on 1D heat equation, the analysis approach is generic and be applicable to other PDEs as well.

\bibliographystyle{ieeetr}
\bibliography{ACC2015_Stability_bib}

\begin{thebibliography}{10}

\bibitem{fan2004gpu}
Z.~Fan, F.~Qiu, A.~Kaufman, and S.~Yoakum-Stover, ``Gpu cluster for high
  performance computing,'' in {\em Proceedings of the 2004 ACM/IEEE conference
  on Supercomputing}, p.~47, IEEE Computer Society, 2004.

\bibitem{owens2008gpu}
J.~D. Owens, M.~Houston, D.~Luebke, S.~Green, J.~E. Stone, and J.~C. Phillips,
  ``Gpu computing,'' {\em Proceedings of the IEEE}, vol.~96, no.~5,
  pp.~879--899, 2008.

\bibitem{nickolls2008scalable}
J.~Nickolls, I.~Buck, M.~Garland, and K.~Skadron, ``Scalable parallel
  programming with cuda,'' {\em Queue}, vol.~6, no.~2, pp.~40--53, 2008.

\bibitem{donzis2014asynchronous}
D.~A. Donzis and K.~Aditya, ``Asynchronous finite-difference schemes for
  partial differential equations,'' {\em Journal of Computational Physics},
  vol.~274, pp.~370--392, 2014.

\bibitem{daafouz2002stability}
J.~Daafouz, P.~Riedinger, and C.~Iung, ``Stability analysis and control
  synthesis for switched systems: a switched lyapunov function approach,'' {\em
  Automatic Control, IEEE Transactions on}, vol.~47, no.~11, pp.~1883--1887,
  2002.

\bibitem{lin2005stability}
H.~Lin, G.~Zhai, L.~Fang, and P.~J. Antsaklis, ``Stability and h∞ performance
  preserving scheduling policy for networked control systems,'' in {\em Proc.
  16th IFAC World Congress on Automatic Control}, 2005.

\bibitem{lee2014optimal}
K.~Lee and R.~Bhattacharya, ``Optimal switching synthesis for jump linear
  systems with gaussian initial state uncertainty,'' in {\em ASME 2014 Dynamic
  Systems and Control Conference}, pp.~V002T24A003--V002T24A003, American
  Society of Mechanical Engineers, 2014.

\bibitem{hassibi1999control}
A.~Hassibi, S.~P. Boyd, and J.~P. How, ``Control of asynchronous dynamical
  systems with rate constraints on events,'' in {\em Decision and Control,
  1999. Proceedings of the 38th IEEE Conference on}, vol.~2, pp.~1345--1351,
  IEEE, 1999.

\bibitem{xiao2000control}
L.~Xiao, A.~Hassibi, and J.~P. How, ``Control with random communication delays
  via a discrete-time jump system approach,'' in {\em American Control
  Conference (ACC), 2000. Proceedings of the 2000}, vol.~3, pp.~2199--2204,
  IEEE, 2000.

\bibitem{zhang2005new}
L.~Zhang, Y.~Shi, T.~Chen, and B.~Huang, ``A new method for stabilization of
  networked control systems with random delays,'' {\em Automatic Control, IEEE
  Transactions on}, vol.~50, no.~8, pp.~1177--1181, 2005.

\bibitem{liu2009stabilization}
M.~Liu, D.~W. Ho, and Y.~Niu, ``Stabilization of markovian jump linear system
  over networks with random communication delay,'' {\em Automatica}, vol.~45,
  no.~2, pp.~416--421, 2009.

\bibitem{lee2014acc}
K.~Lee, A.~Halder, and R.~Bhattacharya, ``Probabilistic robustness analysis for
  stochastic jump linear systems,'' in {\em American Control Conference (ACC),
  2014. Proceedings of the 2014}, pp.~2638--2643, IEEE, 2014.

\bibitem{lee2015performance}
K.~Lee, A.~Halder, and R.~Bhattacharya, ``Performance and robustness analysis
  of stochastic jump linear systems using wasserstein metric,'' {\em
  Automatica}, vol.~51, pp.~341--347, 2015.

\bibitem{lee2015DNCS}
K.~Lee and R.~Bhattacharya, ``Stability analysis of large-scale distributed
  networked control systems with random communication delays: A switched system
  approach,'' {\em arXiv preprint arXiv:1503.03047}, 2015.

\bibitem{pletcher2012computational}
R.~H. Pletcher, J.~C. Tannehill, and D.~Anderson, {\em Computational fluid
  mechanics and heat transfer}.
\newblock CRC Press, 2012.

\bibitem{laffey2007tensor}
T.~J. Laffey and H.~{\v{S}}migoc, ``Tensor conditions for the existence of a
  common solution to the lyapunov equation,'' {\em Linear algebra and its
  applications}, vol.~420, no.~2, pp.~672--685, 2007.

\bibitem{shorten2003result}
R.~Shorten, K.~S. Narendra, and O.~Mason, ``A result on common quadratic
  lyapunov functions,'' {\em Automatic Control, IEEE Transactions on}, vol.~48,
  no.~1, pp.~110--113, 2003.

\bibitem{liberzon2003switching}
D.~Liberzon, {\em Switching in systems and control}.
\newblock Springer, 2003.

\bibitem{gurvits2007stability}
L.~Gurvits, R.~Shorten, and O.~Mason, ``On the stability of switched positive
  linear systems,'' {\em Automatic Control, IEEE Transactions on}, vol.~52,
  no.~6, pp.~1099--1103, 2007.

\bibitem{lin2009stability}
H.~Lin and P.~J. Antsaklis, ``Stability and stabilizability of switched linear
  systems: a survey of recent results,'' {\em Automatic control, IEEE
  Transactions on}, vol.~54, no.~2, pp.~308--322, 2009.

\end{thebibliography}
\end{document}